\documentclass[
  aps,
  prl,
  reprint,
  superscriptaddress,
  longbibliography,
  nofootinbib
]{revtex4-2}

\usepackage[T1]{fontenc}
\usepackage{amsmath,amssymb,amsthm,mathtools}
\usepackage{bm}
\usepackage{booktabs}
\usepackage{diagbox}
\usepackage{microtype}
\usepackage[hidelinks]{hyperref}
\usepackage{graphicx}
\usepackage{xcolor}
\usepackage{tikz}
\usetikzlibrary{arrows.meta,positioning}

\hypersetup{
  pdftitle={Bipartite Bound Information Exists},
  pdfauthor={Jef Pauwels, Nicolas Gisin, and Renato Renner},
  pdfsubject={Classical secret-key agreement and bound information}
}

\newcommand{\indep}{\mathrel{\perp\!\!\!\perp}}
\newcommand{\Fthree}{\mathbb F_3}
\newcommand{\Bern}{\operatorname{Bern}}
\newcommand{\BSC}{\operatorname{BSC}}
\newcommand{\BEC}{\operatorname{BEC}}

\newtheorem{theorem}{Theorem}
\newtheorem{proposition}[theorem]{Proposition}
\newtheorem{lemma}[theorem]{Lemma}
\newtheorem{corollary}[theorem]{Corollary}
\theoremstyle{definition}
\newtheorem{definition}[theorem]{Definition}
\newtheorem{remark}[theorem]{Remark}

\begin{document}

\title{Bipartite Bound Information Exists}

\author{Jef Pauwels}
\affiliation{Department of Applied Physics, University of Geneva, 1211 Geneva, Switzerland}
\affiliation{Constructor University, Bremen, Germany}
\author{Nicolas Gisin}
\affiliation{Department of Applied Physics, University of Geneva, 1211 Geneva, Switzerland}
\affiliation{Constructor University, Bremen, Germany}
\author{Renato Renner}
\affiliation{Institute for Theoretical Physics, ETH Zurich,
8093 Zurich, Switzerland}

\begin{abstract}
There exist bipartite quantum states that cost entanglement to create,
yet from which no singlet can be distilled.  This extreme
irreversibility is known as bound entanglement.  A quarter-century ago, Gisin and Wolf asked whether classical
information theory admits the same phenomenon.  Are there correlations, shared
by two parties and an eavesdropper, that cost secret bits to create, yet
from which none can be distilled?  This question is part of a broader
program exploring the relation between quantum and classical
information theory.  We answer it in the affirmative, providing a
simple example, a distribution of two bits and a trit.  Bound
entanglement thus has a classical counterpart, bound information.
An even more direct correspondence was originally conjectured, namely
that measuring purifications of bound-entangled states yields
distributions with bound information.  The very states that motivated the conjecture, however,
yield no bound information when measured in the standard basis, whereas
suitable measurements of purifications of separable states do.
The analogy between bound entanglement and bound information
thus lies in the accounting of resources, not in a correspondence between
individual quantum states and the classical probability distributions
induced by measuring them.
\end{abstract}

\maketitle

\begin{figure*}[t]
\centering
\begin{tikzpicture}[
  >=Latex,
  every node/.style={font=\footnotesize},
  resource/.style={draw, rounded corners=2pt, align=center,
    minimum width=3.10cm, minimum height=1.15cm, inner sep=4pt},
  bit/.style={draw, circle, align=center, minimum size=9mm, inner sep=1pt},
  channel/.style={-{Latex[length=2mm]}, semithick}
]
  \node[font=\bfseries] at (-4.60,1.50) {(a) Quantum-classical analogy};
  \node[resource, fill=blue!7] (quantum) at (-6.55,0)
    {\textbf{bound entanglement}\\[1mm]$E_{\rm C}>0,\quad E_{\rm D}=0$};
  \node[resource, fill=orange!10] (classical) at (-2.65,0)
    {\textbf{bound information}\\[1mm]$I_{\rm form}>0,\quad S=0$};
  \draw[<->, densely dashed, semithick, gray!75,
    shorten <=3pt, shorten >=3pt]
    (quantum.east) -- (classical.west);

  \draw[densely dashed, gray] (-0.35,-1.48) -- (-0.35,1.55);

  \node[font=\bfseries] at (3.85,1.50) {(b) Proof idea};
  \node[resource, minimum width=1.35cm, minimum height=0.78cm,
    fill=gray!8] (T) at (0.45,0) {$(X,Y)$};
  \node[bit, fill=gray!8] (L) at (1.85,0) {$L$};
  \node[resource, minimum width=2.55cm, minimum height=0.78cm,
    fill=blue!7] (Z) at (4.00,0.82)
    {$Z$: erase $L$ w.p.\ $\tfrac12$};
  \node[resource, minimum width=2.55cm, minimum height=0.78cm,
    fill=orange!10] (J) at (4.00,-0.82)
    {$J$: flip $L$ w.p.\ $\tfrac15$};
  \draw[channel] (T) -- (L);
  \draw[channel] (L) -- (Z.west);
  \draw[channel] (L) -- (J.west);
  \node[anchor=west, align=left, inner sep=0pt, font=\scriptsize]
    at (5.55,0.82)
    {$I(U;Z)\ge I(U;J)\;\forall\,U$\\[0.6mm]
     $\Rightarrow\;S=0$};
  \node[anchor=west, align=left, inner sep=0pt, font=\scriptsize]
    at (5.55,-0.82)
    {decoupling reproduces $J$,\\[0.6mm]
     no channel $Z\to J$\\[0.6mm]
     $\Rightarrow\;I{\downarrow}>0$};
  \node[below=2.2mm of J, font=\scriptsize] {$X\indep Y\mid J$};
\end{tikzpicture}
\caption{\textbf{Quantum-classical analogy and proof idea.}
(a) Bound entanglement and bound information express the same operational
irreversibility, namely a positive creation cost (entanglement cost $E_{\rm C}$,
formation cost $I_{\rm form}$) with zero distillable yield (distillable
entanglement $E_{\rm D}$, secret-key rate $S$).
(b) In our source (Table~\ref{tab:source}), a latent bit $L$ is correlated
with $(X,Y)$ through Eq.~\eqref{eq:Lchannel} and passed through two noisy
channels.  Eve's observation $Z$ reveals $L$ half of the time and is
blank otherwise.  The extra variable $J$ is $L$ flipped with
probability $1/5$ and decouples $X$ from $Y$.  Eve's channel is
less noisy than that of $J$, so no key can be distilled.  Every processing of $Z$ that decouples $X$ from $Y$ must reproduce $J$
up to a merging of outputs, and no channel $Z\to J$ exists, so the
intrinsic information stays positive.}
\label{fig:mechanism}
\end{figure*}

\emph{Introduction.---}%
Schr{\"o}dinger singled out entanglement as \emph{the} characteristic
trait of quantum theory~\cite{Schroedinger1935}.  Quantum communication
turned it into a currency, spent in teleportation and entanglement-based
cryptography~\cite{BennettTeleportation1993,Ekert1991}.  A currency
invites bookkeeping.  How much entanglement does a state cost to create,
and how much can be recovered from it?  It came as a surprise that the
books can fail to balance in the most extreme
way~\cite{Horodecki1998,HorodeckiReview2009}.  There exist \emph{bound-entangled} states, i.e., bipartite states that cannot be created without entanglement, yet from
which no singlet can be distilled.

Classical cryptography has a currency of its own, namely secret correlations.
Alice, Bob, and the adversary Eve receive independent repetitions of a
triple $(X,Y,Z)$ of random variables, emitted by a source
with a known joint distribution over finite alphabets.  Alice sees only $X$, Bob only $Y$, and Eve only $Z$.
Using local randomness and unlimited authenticated public discussion,
which Eve reads in full, Alice and Bob attempt to agree on a secret key, i.e., shared uniform
bits about which Eve has essentially no information.  The maximal number of such bits that can be
generated per repetition of the triple is the secret-key rate
$S(X;Y\Vert Z)$~\cite{Maurer1993,AhlswedeCsiszar1993}
(for the equivalent definitions of
secrecy~\cite{MaurerWolf2000,RennerThesis2005,BenOrEtAl2005,PortmannRenner2022},
see Appendix~\ref{sec:conventions}).
This source model describes the classical post-processing stage of
quantum key distribution when Eve's side information is classical.
Alice and Bob distill their key from measured data by public
discussion~\cite{GisinEtAlQKD2002}.

Entanglement and secret correlations are connected
through a purely classical quantity, the intrinsic
information $I(X;Y{\downarrow}Z)$.  It was originally introduced by
Maurer and Wolf as an upper bound on the secret-key
rate~\cite{MaurerWolf1999}.  To relate $I(X;Y{\downarrow}Z)$ to
entanglement, let Alice and Bob share a bipartite quantum state and let
Eve hold a purifying system, which carries all that an adversary can
know about their state.  Local measurements by the three parties then
turn this pure tripartite state into a classical source.  Gisin and Wolf showed that the reduced state of Alice and Bob is
entangled if and only if, whatever measurement Eve performs, Alice and
Bob can choose measurements whose outcome distribution has positive
intrinsic information~\cite{GisinWolf2000,GisinRennerWolf2001,ChristandlRennerWolf2003}.
Entanglement is thus visible in purely classical data.  The analogy
between entanglement and secret correlations extends well beyond
this criterion, as secret correlations mirror entanglement in a
whole dictionary of protocols~\cite{CollinsPopescu2002}, and classical
and quantum key distillation admit a unified
treatment~\cite{ChristandlEtAl2007}.

This analogy suggested that bound entanglement, too, should leave a
classical trace.  Gisin and Wolf conjectured the existence of \emph{bound
information}~\cite{GisinWolf2000}, that is, of a distribution whose intrinsic
information is positive although its secret-key rate vanishes.  More
specifically, they expected that measuring purifications of
bound-entangled states yields such distributions~\cite{GisinWolf2000}.  Renner and
Wolf later made the analogy to quantum theory
exact~\cite{RennerWolf2003}.  They introduced
the formation cost $I_{\rm form}(X;Y\mid Z)$, defined as the minimal rate of
pre-shared secret bits needed to generate samples of $(X,Y)$ by public
discussion while ensuring that the public transcript can be simulated
from Eve's target variable $Z$.  A source with
\begin{equation}
  S(X;Y\Vert Z)=0<I_{\rm form}(X;Y\mid Z)
  \label{eq:definition}
\end{equation}
therefore costs secrecy to create although none can ever be recovered.  This is
the defining property of bound entanglement, transcribed into classical
probability.

The conjecture that bound information exists remained open
for more than twenty-five years.  The difficulty lies in
the vanishing rate, since $S=0$ means that no
protocol whatsoever can distill a key, however many rounds of
interaction and however many copies of the source it uses.
Work in this direction yielded partial results.  Renner and
Wolf~\cite{RennerWolf2002,RennerWolf2003} constructed sources
whose formation cost exceeds their secret-key rate by an arbitrarily large
factor.  A source of rate exactly zero, however, remained out of reach.
Further work established non-distillability
criteria~\cite{MasanesWinter2010}, raised the
question whether individually useless sources can be
combined into a useful one~\cite{PretticoAcin2013}, and gathered
numerical evidence for bound secrecy under restricted two-way
post-processing~\cite{KhatriLutkenhaus2017}, but did not
settle the conjecture itself.
The problem changes character when the secret key is to be shared
among more than two honest parties.  In such a multipartite setting,
bound information does exist~\cite{AcinCiracMasanes2004,MasanesAcin2006}.  The proofs
certify non-distillability by grouping the parties across different
bipartitions, which is not possible in the bipartite setting of the present
work, where only two honest parties share the key.

The question of whether bipartite bound information exists belongs
to the broader program of exploring the relation between quantum and
classical information theory.  A central task of this program, from
the viewpoint of quantum foundations, is to clarify what makes quantum
theory genuinely quantum.  This requires knowing its closest classical
analogues.  Whether bound entanglement has such an analogue is
precisely the question of whether bound information exists.  Our main
contribution is to settle this question.  We also show, however, that the
correspondence between bound entanglement and bound information is not
the one originally conjectured.

Our first result is the explicit source of
Table~\ref{tab:source}, consisting of two bits and a trit satisfying
Eq.~\eqref{eq:definition}, which proves that bound information exists.
Our second result concerns the question whether there is a
direct relation between bound-entangled states and classical
distributions with bound information.  For this we consider the
Horodecki qutrit family~\cite{HorodeckiActivated1999}.  Gisin and Wolf's
conjecture grew out of the distributions obtained by measuring
these states in the standard bases~\cite{GisinWolf2000}.  We prove that these distributions are
key-distillable throughout the range in which their creation costs
any secrecy.  The distillability carries over from the distributions
to the Horodecki states themselves, so even the bound-entangled members
have positive distillable key, contrary to the conjectured correspondence.
We also show, in Appendix~\ref{sec:horodecki-bi-measurement}, that other measurements of purifications of the same states do yield bound
information and, perhaps surprisingly, that they do so for separable and
bound-entangled members alike.  There is thus
no direct relation in either direction.  A bound-entangled state need
not yield bound information when measured, and bound information in
measured data need not come from a bound-entangled state.

\emph{Proof strategy.---}%
We prove the first result through the intrinsic
information~\cite{MaurerWolf1999},
\begin{equation}
 I(X;Y{\downarrow}Z):=\inf_{Z\to\bar Z}I(X;Y\mid\bar Z),
 \label{eq:intrinsic}
\end{equation}
the smallest conditional mutual information Eve can reach by processing her
data through a channel $Z\to\bar Z$.  The
intrinsic information sits between the two operational
quantities~\cite{MaurerWolf1999,RennerWolf2003}:
\begin{equation}
 S(X;Y\Vert Z)\le I(X;Y{\downarrow}Z)\le I_{\rm form}(X;Y\mid Z).
 \label{eq:sandwich}
\end{equation}
It therefore suffices to establish $S=0<I{\downarrow}$.

To prove the two halves, $S=0$ and $I{\downarrow}>0$, we compare
Eve's observation $Z$ with another variable $J$.  There are two
different senses in which $Z$ can be at least as informative as $J$,
and they are not equivalent.  Eve \emph{simulates} $J$ if some
channel applied to $Z$ reproduces the joint distribution of $J$ with
$(X,Y)$~\cite{Blackwell1953}.  The variable $Z$ \emph{dominates} $J$ if it is more
informative in every mutual-information comparison, i.e., 
$I(U;Z)\ge I(U;J)$ for every joint input law $P_{UXY}$,
with the outputs generated through the fixed channels $P_{Z\mid XY}$
and $P_{J\mid XY}$.
Dominance in this sense is the \emph{less-noisy} order of K{\"o}rner and
Marton~\cite{KornerMarton1977}, formulated for the channels from $(X,Y)$
to $Z$ and to $J$.  We denote it by $P_{Z\mid XY}\succeq_{\rm ln}P_{J\mid XY}$.  Simulation implies dominance
by the data-processing inequality.  The converse is false.  The infimum in
Eq.~\eqref{eq:intrinsic} grants Eve only simulations.  A vanishing
secret-key rate, however, can be certified by the broader relation, as
shown by Gohari, G{\"u}nl{\"u}, and Kramer (GGK)~\cite{GohariGunluKramer2020}.
If Eve dominates a variable $J$ that renders $X$ and $Y$
conditionally independent, no protocol can distill a key.  Our existence proof of bound information relies on a source
that is situated in this gap [Fig.~\ref{fig:mechanism}(b)].

\begin{table}[b]
\caption{\label{tab:source}%
The bound-information source.  Each cell lists, for the corresponding pair
$(x,y)$, the values $z$ of Eve's variable in parentheses, followed by the
joint probability $P(X{=}x,Y{=}y,Z{=}z)$ in units of $1/36$.  Combinations
of probability zero are omitted.  For example,
$P(0,0,0)=\frac5{36}$ and $P(0,1,\bot)=\frac4{36}$.}
\begin{ruledtabular}
\begin{tabular}{c|cc}
 \diagbox{$y$ $(z)$}{$x$} & $0$ & $1$\\
\hline
 $0$ & $(0)\;5\quad(\bot)\;5$ & $(0)\;2\quad(1)\;2\quad(\bot)\;4$\\
 $1$ & $(0)\;2\quad(1)\;2\quad(\bot)\;4$ & $(1)\;5\quad(\bot)\;5$\\
\end{tabular}
\end{ruledtabular}
\end{table}

\emph{The source.---}%
We now present the source that realizes bound information.  Let
$X,Y\in\{0,1\}$ and $Z\in\{0,1,\bot\}$ be distributed according to
Table~\ref{tab:source}.
\begin{theorem}
\label{thm:bound-information}
The source of Table~\ref{tab:source} has bipartite bound information:
\begin{equation}
 S(X;Y\Vert Z)=0
 <I(X;Y{\downarrow}Z)
 \le I_{\rm form}(X;Y\mid Z).
 \label{eq:main-theorem}
\end{equation}
\end{theorem}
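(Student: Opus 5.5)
The proof splits into the two inequalities of Eq.~\eqref{eq:main-theorem}. The right-hand inequality is the upper half of Eq.~\eqref{eq:sandwich}, so it needs no further work. The plan rests on a latent-variable description of Table~\ref{tab:source}. From the table, $P_{XY}(0,0)=P_{XY}(1,1)=\frac{10}{36}$ and $P_{XY}(0,1)=P_{XY}(1,0)=\frac8{36}$. Define a bit $L$ by $L=X$ if $X=Y$, and $L$ uniform (independent of everything else) if $X\neq Y$. Then $Z$ is exactly $L$ passed through an erasure channel $\BEC(\tfrac12)$, which outputs $\bot$ with probability $\frac12$. Next set $J:=L$ passed through $\BSC(\tfrac15)$. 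A direct check in units of $1/36$ gives
\begin{equation*}
P(x,y,J{=}0)=\begin{pmatrix}8&4\\4&2\end{pmatrix},\qquad
P(x,y,J{=}1)=\begin{pmatrix}2&4\\4&8\end{pmatrix}.
\end{equation*}
Both matrices have rank one, so $X\indep Y\mid J$.

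\emph{Step 1: $S=0$.} I would invoke the GGK criterion: if $P_{Z\mid XY}\succeq_{\rm ln}P_{J\mid XY}$ and $X\indep Y\mid J$, then $S(X;Y\Vert Z)=0$. Both channels factor through $L$, so $U\to(X,Y)\to L\to Z$ and $U\to(X,Y)\to L\to J$ are Markov chains. Hence it suffices to show $\BEC(\tfrac12)\succeq_{\rm ln}\BSC(\tfrac15)$ on the binary input $L$, since any input law $P_{UXY}$ then just induces some law $P_{UL}$. For this I would use the known characterization that $\BEC(\epsilon)$ is less noisy than $\BSC(p)$ iff $\epsilon\le 4p(1-p)$. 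To stay self-contained, I would instead verify $I(U;Z)=\tfrac12 I(U;L)\ge I(U;J)$ for every binary-input $P_{UL}$, using concavity and a two-point (Mrs.~Gerber-type) reduction. Here $4p(1-p)=\frac{16}{25}>\frac12$, so the condition holds with margin.

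\emph{Step 2: $I{\downarrow}>0$.} First, by the known cardinality bound (one may take $|\bar Z|\le|Z|=3$), the infimum in Eq.~\eqref{eq:intrinsic} runs over a compact set of channels and is attained. It therefore suffices to show that no channel $Z\to\bar Z$ achieves $I(X;Y\mid\bar Z)=0$. Each output $\bar z$ collects weights $(a_0,a_1,a_\bot)\ge0$ from the three inputs, and $X\indep Y\mid\bar Z{=}\bar z$ means the $2\times2$ matrix
\begin{equation*}
a_0\begin{pmatrix}5&2\\2&0\end{pmatrix}+a_1\begin{pmatrix}0&2\\2&5\end{pmatrix}+a_\bot\begin{pmatrix}5&4\\4&5\end{pmatrix}
\end{equation*}
is singular. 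Its determinant is a quadratic form $q(a_0,a_1,a_\bot)$. I would determine the zero set of $q$ on the nonnegative orthant, expecting it to consist of exactly two rays (plus possibly degenerate boundary points). These rays should be the ones realized by $J{=}0$ and $J{=}1$, namely $(a_0,a_1,a_\bot)\propto(4,1,5)$ and $(1,4,5)$, up to scaling; the determinant vanishes on both because they reproduce the rank-one matrices above. Every output of a decoupling channel must therefore lie on one of these rays, so the channel merely reproduces $J$, up to merging outputs lying on the same ray.

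Finally I would show that this is impossible. The column sums of the channel must be $1$ for each input $z$, while the outputs are restricted to the two rays. Since $z{=}0$ contributes to both rays in ratio $4\!:\!1$, the totals force a $2\times3$ linear system with a nonnegative solution. Solving it, I expect the required weights to be negative, or the erasure mass to be inconsistent. Equivalently, no channel $Z\to J$ exists, since the $\bot$ symbol carries no information yet $J$ is correlated with $L$ beyond what $\frac12$ erasure allows in the simulation sense. The main obstacle is the complete analysis of the zero set of $q$ on the orthant, including boundary cases such as $a_\bot=0$ or mixtures of only two inputs. I would handle it by parametrizing $a_\bot=1$ and analysing the resulting conic in $(a_0,a_1)$, checking that it meets the positive quadrant only at the two predicted points.
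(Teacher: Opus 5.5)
Your latent-bit description, the rank-one $J$-slices, and Step~1 are correct and follow the paper's route. Your criterion $\epsilon\le4p(1-p)$ is the same condition as the paper's $1-\epsilon\ge(1-2\delta)^2$, which it obtains from the BSC contraction, and GGK then applies as you say.

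The gap is in Step~2. The three slices are linearly dependent, $M_\bot=M_0+M_1$, so the map from weights $(a_0,a_1,a_\bot)$ to slices has kernel spanned by $(1,1,-1)$. Your quadratic form therefore factors:
\[
 q(a_0,a_1,a_\bot)=(4a_0-a_1+3a_\bot)(-a_0+4a_1+3a_\bot).
\]
Its zero set in the orthant is two planar cones, not two rays. One is $a_0=4a_1+3a_\bot$, where the slice is a multiple of $A$. The other is $a_1=4a_0+3a_\bot$, where the slice is a multiple of $B$. Your predicted points are not on it at all: $q(4,1,5)=q(1,4,5)=450$. For example, $4M_0+M_1+5M_\bot=\frac1{36}\left(\begin{smallmatrix}45&30\\30&30\end{smallmatrix}\right)$ is not rank one.

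The verification you plan would also fail. At $a_\bot=1$ the conic degenerates into the line pair $a_1=4a_0+3$ and $a_0=4a_1+3$, which meets the quadrant in two half-lines rather than two points. Your final $2\times3$ system assumes every output is proportional to $(4,1,5)$ or $(1,4,5)$, so it derives infeasibility from a false premise. As written, the argument for $I(X;Y{\downarrow}Z)>0$ does not go through.

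The missing idea is the paper's change of basis: $M_0=\tfrac23A-\tfrac16B$, $M_1=-\tfrac16A+\tfrac23B$, $M_\bot=\tfrac12(A+B)$. Every output slice is then $aA+bB$ with $\det(aA+bB)=ab/36$, so a nonnegative rank-one output is a multiple of $A$ or of $B$. You must merge outputs by slice type, not by weight vector, since many weight vectors produce the same slice.

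In your coordinates the argument then closes as follows. The merged $A$-type output has weights $(w_0,w_1,w_\bot)$ with $w_0=4w_1+3w_\bot$. The complementary output needs $1-w_1=4(1-w_0)+3(1-w_\bot)$. Adding the two gives $w_0-w_1=\tfrac65$, which is impossible for probabilities.

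This is the paper's $r_1-r_0=\tfrac65$. The paper obtains it more transparently by conditioning on $(X,Y)=(0,0)$ and $(1,1)$, where $L$ is deterministic. Your heuristic is that $\BEC(\tfrac12)$ cannot simulate $\BSC(\tfrac15)$, since simulation needs $\epsilon\le2\delta$. Restricting to these two diagonal inputs is exactly what turns that heuristic into a proof.
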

For the proof, we collect, for each value $z$, the unnormalized
\emph{slice} $M_z(x,y):=P(X{=}x,Y{=}y,Z{=}z)$, a $2\times2$ matrix that
can be read off the table.  For instance,
$M_\bot=\frac1{36}\left(\begin{smallmatrix}5&4\\4&5\end{smallmatrix}\right)$.

The construction is best understood through a latent bit $L\in\{0,1\}$,
generated from the pair $(X,Y)$ by the channel $P_{L\mid XY}$
defined by
\begin{equation}
 P(L{=}1\mid x,y)=
 \begin{pmatrix}0&1/2\\[1mm]1/2&1\end{pmatrix}_{x,y}.
 \label{eq:Lchannel}
\end{equation}
A direct check gives the joint slices $P_{XY,L=l}=2M_l$. In particular,
$L$ is uniform.  Eve's variable arises by sending $L$ through an erasure
channel, so that $Z$ equals $L$ with probability $1/2$ and the blank symbol $\bot$
otherwise.  This halves the two slices of $L$ and creates $M_\bot$, and so
reproduces Table~\ref{tab:source} exactly.

Now send the same latent bit through a flip channel and call the output
$J$, which equals $L$ with probability $4/5$ and $1-L$ otherwise.  The flip
probability $1/5$ is not arbitrary.  It is chosen precisely so that the
two slices of $J$ factorize:
\begin{equation}
\begin{split}
 A:=P_{XY,J=0}
 &=\frac1{18}\begin{pmatrix}4&2\\2&1\end{pmatrix}
 =\frac1{18}\begin{pmatrix}2\\1\end{pmatrix}
   \begin{pmatrix}2&1\end{pmatrix},\\
 B:=P_{XY,J=1}
 &=\frac1{18}\begin{pmatrix}1&2\\2&4\end{pmatrix}
 =\frac1{18}\begin{pmatrix}1\\2\end{pmatrix}
   \begin{pmatrix}1&2\end{pmatrix}.
 \label{eq:AB}
\end{split}
\end{equation}
Both have rank one.  Given $J$, the pair $(X,Y)$ is therefore a product
distribution. We write $X\indep Y\mid J$ and say that $J$ \emph{decouples}
$X$ from $Y$.  Note also $A+B=P_{XY}$.

With the latent bit $L$ and the decoupling variable $J$ in
place, we prove the two halves of
Theorem~\ref{thm:bound-information}, beginning with the vanishing rate.

\emph{Zero distillable key.---}%
Take any auxiliary variable $U$ and any joint law of $(U,X,Y)$, and
apply the channel of Eq.~\eqref{eq:Lchannel} followed by the erasure and
flip channels to obtain $L$, $Z$, and $J$.  Since the erasure and flip channels
act on $L$ with fresh randomness, $(Z,J)$ is conditionally independent
of $(U,X,Y)$ given $L$.  The mutual informations with $U$ then satisfy
\begin{equation}
 I(U;Z)=\tfrac12\,I(U;L),\qquad
 I(U;J)\le\tfrac9{25}\,I(U;L).
 \label{eq:contractions}
\end{equation}
The equality holds because $Z$ reveals $L$ exactly half of the time.  The
inequality is the strong data-processing inequality of the flip channel,
$I(U;J)\le(1-2\delta)^2I(U;L)$ at flip probability
$\delta$~\cite{AhlswedeGacs1976} (a self-contained proof is in
Appendix~\ref{sec:bsc}).
Since $\tfrac12>\tfrac9{25}$, $Z$ dominates $J$, that is,
$P_{Z\mid XY}\succeq_{\rm ln}P_{J\mid XY}$.

GGK proved that dominance carries over
to the secret-key rate, whatever interactive protocol is
used~\cite{GohariAnantharam2010,GohariGunluKramer2020}:
\begin{equation}
 P_{Z\mid XY}\succeq_{\rm ln}P_{J\mid XY}
 \ \Longrightarrow\
 S(X;Y\Vert Z)\le S(X;Y\Vert J).
 \label{eq:GGK}
\end{equation}
Appendix~\ref{sec:capacity-comparison} contains an independent proof.
Because $J$ decouples $X$ from $Y$, Eq.~\eqref{eq:sandwich} applied to $J$
gives $S(X;Y\Vert J)=0$, and therefore
\begin{equation}
 S(X;Y\Vert Z)=0.
 \label{eq:zero}
\end{equation}

\emph{Positive formation cost.---}%
The second half of the proof establishes the positive intrinsic
information $I(X;Y{\downarrow}Z)>0$.  By Eq.~\eqref{eq:intrinsic},
$I(X;Y{\downarrow}Z)=0$ means that $I(X;Y\mid\bar Z)$ can be made
arbitrarily small by processing $Z$.  For finite alphabets the infimum is
attained~\cite{ChristandlRennerWolf2003} (see Appendix~\ref{sec:finite-zero} for a compactness proof), so some
channel $Z\to\bar Z$ would achieve $I(X;Y\mid\bar Z)=0$, that is,
decouple $X$ from $Y$.  Our proof that no such channel exists has two
steps.  We first show that any channel $Z\to\bar Z$ that decouples $X$
from $Y$ must, after a merging of its outputs, reproduce the variable
$J$.  We then show that no channel $Z\to J$ exists.

For the first step, note that the source
slices lie in the plane spanned by the two product slices of
Eq.~\eqref{eq:AB},
\begin{equation}
 M_0=\tfrac23A-\tfrac16B,\quad
 M_1=-\tfrac16A+\tfrac23B,\quad
 M_\bot=\tfrac12(A+B),
 \label{eq:basis}
\end{equation}
so for any channel $Z\to\bar Z$, every slice $N_k=P_{XY,\bar Z=k}$ has the
form $N_k=a_kA+b_kB$.  A direct calculation yields, for every output $k$,
\begin{equation}
 \det(a_kA+b_kB)=\frac{a_kb_k}{36}.
 \label{eq:det}
\end{equation}
Suppose now that $\bar Z$ decouples $X$ from $Y$, that is,
$I(X;Y\mid\bar Z)=0$.  Conditional independence means that
every nonzero slice $N_k$ is a nonnegative matrix of rank one, so Eq.~\eqref{eq:det} forces $a_kb_k=0$.  Hence
each $N_k$ lies on the ray of $A$ or of $B$.  Merge the outputs
accordingly, mapping every $A$-type output to $0$ and every $B$-type
output to $1$.  The merged variable is again a processing of $Z$, and its
two slices are $\lambda A$ and $\mu B$ for some $\lambda,\mu\ge0$.  Slices
always sum to the marginal, so $\lambda A+\mu B=P_{XY}=A+B$.  Since $A$
and $B$ are linearly independent, $\lambda=\mu=1$, so the merged variable has
exactly the slices of $J$.

So far we have shown that any channel turning $Z$ into an exact decoupler
would simulate $J$.  It remains to show that no channel $Z\to J$ exists.
Suppose one did, and write $r_z=P(J{=}1\mid Z{=}z)$.  Given
$(X,Y)=(0,0)$, the latent bit is $L=0$, so $Z$ is $0$ or $\bot$ with
probability $\tfrac12$ each, while $J=1$ must occur with probability
$\tfrac15$.  Given $(1,1)$, the same reasoning gives $\tfrac45$:
\begin{equation}
 \frac{r_0+r_\bot}{2}=\frac15,\qquad
 \frac{r_1+r_\bot}{2}=\frac45.
 \label{eq:nodegrade}
\end{equation}
The two conditions are incompatible.  Subtracting the first from the second
eliminates $r_\bot$ and gives $r_1-r_0=6/5$, which is impossible for
probabilities.  We conclude
\begin{equation}
 I(X;Y{\downarrow}Z)>0.
 \label{eq:positive}
\end{equation}
Together with Eqs.~\eqref{eq:sandwich} and \eqref{eq:zero}, this proves
Theorem~\ref{thm:bound-information}.\footnote{By a recent result, the
reduced intrinsic information of a finite source vanishes exactly when the
intrinsic information does~\cite{KhesinTungVedula2023}.  The source of
Table~\ref{tab:source} therefore also satisfies
$S=0<I{\downarrow\downarrow}$, the improved bound of
Refs.~\cite{RennerSkripskyWolf2003,RennerWolf2003}.}
The example is not isolated.  Appendix~\ref{sec:family} embeds it into a two-parameter family
with the same properties.

\begin{table}[b]
\caption{\label{tab:gw}%
Standard-basis distributions of the Horodecki family, for $2\le\alpha\le5$, corresponding to Example~3 of
Refs.~\cite{GisinWolf2000,GisinRennerWolf2002} (see also
Ref.~\cite{RennerDiploma2000}).  Entries as in Table~\ref{tab:source}, in
units of $1/21$.  Eve's symbol is $\bot$ on the whole
diagonal, so she cannot tell which value $X=Y$ took, while each
off-diagonal outcome reveals the pair completely.
In the quantum picture, $Z$ labels the eigenstates of $\sigma_\alpha$
(Fig.~\ref{fig:alpha}); all diagonal outcomes stem from the entangled
eigenstate $(|00\rangle+|11\rangle+|22\rangle)/\sqrt3$.  The family is
symmetric under $\alpha\leftrightarrow5-\alpha$ with the roles of Alice
and Bob exchanged, so the range $2\le\alpha\le5$ covers all cases.}
\begin{ruledtabular}
\begin{tabular}{c|ccc}
 \diagbox{$y$ $(z)$}{$x$} & $0$ & $1$ & $2$\\
\hline
 $0$ & $(\bot)\;2$ & $(10)\;5-\alpha$ & $(20)\;\alpha$\\
 $1$ & $(01)\;\alpha$ & $(\bot)\;2$ & $(21)\;5-\alpha$\\
 $2$ & $(02)\;5-\alpha$ & $(12)\;\alpha$ & $(\bot)\;2$\\
\end{tabular}
\end{ruledtabular}
\end{table}

\begin{figure}[t]
\centering
\begin{tikzpicture}[xscale=1.85]
  \draw[densely dashed, gray!70] (3,-0.14) -- (3,2.80);
  \draw[densely dashed, gray!70] (4,-0.14) -- (4,2.80);
  \draw[->] (1.92,0) -- (5.35,0) node[right,font=\footnotesize]{$\alpha$};
  \foreach \x in {2,3,4,5}
    \draw (\x,0.05) -- (\x,-0.05) node[below,font=\scriptsize]{$\x$};

  \fill[green!18] (2,0.30) rectangle (4,0.72);
  \draw (2,0.30) rectangle (4,0.72);
  \node[font=\scriptsize] at (3,0.51) {bound information};
  \node[anchor=south west,font=\scriptsize] at (2,0.76)
    {$\alpha$-adapted POVMs on a purification};

  \fill[gray!14]   (2,1.34) rectangle (3,1.76);
  \fill[orange!22] (3,1.34) rectangle (5,1.76);
  \draw (2,1.34) rectangle (5,1.76);
  \draw (3,1.34) -- (3,1.76);
  \node[font=\scriptsize] at (2.5,1.55) {$I_{\rm form}=0$};
  \node[font=\scriptsize] at (4,1.55) {key-distillable: $S>0$};
  \node[anchor=south west,font=\scriptsize] at (2,1.80)
    {standard-basis distribution (Table~\ref{tab:gw})};

  \fill[gray!14]  (2,2.38) rectangle (3,2.80);
  \fill[blue!14]  (3,2.38) rectangle (4,2.80);
  \fill[blue!32]  (4,2.38) rectangle (5,2.80);
  \draw (2,2.38) rectangle (5,2.80);
  \draw (3,2.38) -- (3,2.80);
  \draw (4,2.38) -- (4,2.80);
  \node[font=\scriptsize] at (2.5,2.59) {separable};
  \node[font=\scriptsize] at (3.5,2.59) {bound ent.};
  \node[font=\scriptsize] at (4.5,2.59) {distillable};
  \node[anchor=south west,font=\scriptsize] at (2,2.84)
    {quantum state $\sigma_\alpha$};
\end{tikzpicture}
\caption{\textbf{One state family, two classicalizations.}
Top: the Horodecki qutrit states
$\sigma_\alpha=\frac27\,\Pi_\Phi+\frac\alpha7\,\sigma_+
 +\frac{5-\alpha}7\,\sigma_-$,
where $\Pi_\Phi$ projects onto
$(|00\rangle+|11\rangle+|22\rangle)/\sqrt3$ and $\sigma_\pm$ are the
uniform mixtures of $|i,i\pm1\rangle$ (indices modulo three), are separable
for $2\le\alpha\le3$, bound entangled for $3<\alpha\le4$, and distillable
for $4<\alpha\le5$~\cite{HorodeckiActivated1999}.
Middle: the standard-basis distributions of Table~\ref{tab:gw} have a
single transition at $\alpha=3$.  Creating them costs no secrecy for
$2\le\alpha\le3$, whereas a secret key is distillable for every
$3<\alpha\le5$ (Theorem~\ref{thm:gw}).
Bottom: the $\alpha$-adapted POVMs of
Appendix~\ref{sec:horodecki-bi-measurement},
applied to a purification of $\sigma_\alpha$, produce one and the same
bound-information distribution for every $2\le\alpha\le4$.  Thus the
identical classical resource arises from both separable and bound-entangled
members of the family.  This alternative construction is not claimed for
$4<\alpha\le5$.}
\label{fig:alpha}
\end{figure}

\emph{The quantum analogy revisited.---}%
Theorem~\ref{thm:bound-information} establishes the analogy at the level of
resources [Fig.~\ref{fig:mechanism}(a)].  We now turn to the
second result, the question whether there is a direct relation between
bound-entangled states and classical distributions with bound
information.  A first obstacle to such a relation lies within quantum theory
itself.  Bound entanglement does not preclude a secret key, as some
bound-entangled states do yield one~\cite{HorodeckiEtAl2005}.
The reason is that the target of key distillation is
not the singlet but the more general \emph{private
state}~\cite{HorodeckiEtAl2009}.  We now show that, for the standard-basis distributions that
started the subject, a key is distillable throughout the bound-entangled
range.  Gisin and Wolf obtained the
distributions of Table~\ref{tab:gw} by measuring the Horodecki family
$\sigma_\alpha$ of
Ref.~\cite{HorodeckiActivated1999} in the standard bases
(Fig.~\ref{fig:alpha})~\cite{GisinWolf2000}.  Their protocol distills a key
exactly for $\alpha>4$, and
Refs.~\cite{GisinRennerWolf2001,GisinRennerWolf2002} assembled strong
evidence that, on the bound-entangled interval $3<\alpha\le4$, these
distributions have bound information.  This interval became \emph{the}
candidate region for bound information.
\begin{theorem}
\label{thm:gw}
The family of Table~\ref{tab:gw} satisfies $S(X;Y\Vert Z)>0$ for every
$3<\alpha\le5$, whereas $I_{\rm form}(X;Y\mid Z)=0$ for $2\le\alpha\le3$.
\end{theorem}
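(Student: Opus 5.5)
The plan is to prove the two halves of Theorem~\ref{thm:gw} separately. For the distillable range $3<\alpha\le5$ I would give an explicit protocol. For $2\le\alpha\le3$ I would give an explicit secrecy-free formation protocol. In both parts the symmetry $\alpha\leftrightarrow5-\alpha$ noted under Table~\ref{tab:gw} is not needed, since we work directly on $[2,5]$.

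\emph{Key distillability for $3<\alpha\le5$.} I would first reduce the ternary source to a binary one by a public step. Alice announces a uniformly random two-element set $\{X,X'\}\ni X$, and Bob keeps the round only if $Y\in\{X,X'\}$. Both then relabel so that the key bit is the position of $X$ in the announced set. Conditioned on acceptance, Alice and Bob hold bits that agree exactly on the diagonal cells, where Eve sees $\bot$. They disagree on off-diagonal cells, where Eve learns everything. Eve's total information also includes the announced set.

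On this binary source I would run repetition-code advantage distillation. Alice sends $X_1\oplus C,\dots,X_N\oplus C$ for a secret bit $C$, and Bob accepts only if his $N$ decodings coincide. This is followed by one-way key agreement, using the Csisz\'ar--K\"orner rate $H(C\mid Z^N,\text{transcript})-H(C\mid \hat C_{\rm Bob})>0$. As $N\to\infty$, Bob's error decays like $(\varepsilon/(1-\varepsilon))^N$, where $\varepsilon$ is the accepted disagreement probability. Eve's uncertainty is governed by the rounds in which she saw $\bot$ on the accepted event.

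The resulting condition is a comparison between the weight of the $\alpha$-cells and the $(5-\alpha)$-cells in the two-element subsets, and the diagonal weight $2$. It should read ``Eve's guessing bias per round is smaller than Bob's agreement ratio''. The main computation is to check that this condition is exactly $\alpha>3$. If the simple announcement step turns out to give a threshold strictly above $3$, I would bias the subset announcement toward the subsets $\{i,i+1\}$ along the $\alpha$-direction. Here $(x,y)=(i+1,i)$, resp.\ $(i,i+1)$, carries weight $5-\alpha$, resp.\ $\alpha$; since $\alpha>5-\alpha$, this makes the disagreement pattern asymmetric, and Bob can exploit it. I would then re-optimise the threshold.

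\emph{Zero formation cost for $2\le\alpha\le3$.} I would use the following sufficient condition, which I would state and check against the definition in~\cite{RennerWolf2003}. Suppose there is a public variable $W$, a channel $Z\to W$ (so the transcript is simulable from $Z$), and a decomposition such that $X\indep Y\mid W$. Suppose further that $P_{XYZ}$ is reproduced when $Z$ is regenerated from $(X,Y,W)$ by a map that is consistent with the channel. Then Alice and Bob can announce $W$, sample $X$ and $Y$ locally and independently, and $I_{\rm form}=0$.

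Concretely, I would look for a processing $Z\to\bar Z\in\{0,1,2\}$. It sends $\bot$ uniformly to the three outputs, and sends each off-diagonal label $(ij)$ to a fixed output $k(i,j)$ with some probability, the rest again spread. The aim is that each slice $N_k$ is a rank-one $3\times3$ matrix $u_kv_k^{\rm T}$. Because every slice receives equal diagonal mass from $\bot$, one needs $u_k(i)v_k(i)$ independent of $i$. The off-diagonal entries $u_k(i)v_k(j)$ must then be covered by the available weights $\alpha$ and $5-\alpha$. I expect this feasibility to be a small system of inequalities that holds exactly when $\alpha(5-\alpha)\ge$ the square of the diagonal requirement, with equality at $\alpha=3$, i.e.\ exactly for $\alpha\le3$. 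This matches separability of $\sigma_\alpha$ there. The step I expect to be the main obstacle is upgrading this intrinsic-information decoupling to zero \emph{formation} cost: one must verify that Eve's actual $Z$, not just $\bar Z$, can be regenerated from the public data consistently. If that fails directly, I would exploit separability of $\sigma_\alpha$ for $\alpha\le3$. I would write $\sigma_\alpha$ as a mixture of product states and let the public transcript be Eve's view of the purification in the standard-basis measurement, built from that decomposition.
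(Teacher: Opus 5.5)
\textbf{The distillation half has a genuine gap on $3<\alpha\le4$, which is exactly the new part of the theorem.} The biasing fallback you keep in reserve does not rescue it. Suppose Alice announces $\{X,X+1\}$ with probability $p$ and $\{X,X+2\}$ otherwise, and condition on the set $\{a,a+1\}$ and on Bob's acceptance. With rows indexed by Alice's bit and $0\leftrightarrow a$, the binary source then has slice $\mathrm{diag}(2p,2(1-p))$ under $\bot$. It also has two single cells, $\alpha p$ at $(0,1)$ and $\beta(1-p)$ at $(1,0)$, each fully revealed to Eve. Eve can merge $\bot$ with fractions $f_1,f_2$ of these two symbols. The merged slice has determinant $p(1-p)(4-f_1f_2\alpha\beta)$, so she can make it rank one whenever $\alpha\beta\ge4$, i.e.\ $\alpha\le4$; the leftovers are single cells. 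Hence for every $p$ the binarized source has zero intrinsic information, and therefore zero key, whatever you run on it afterwards. The bias $p$ cancels. Your repetition-code protocol therefore at best reproduces the Gisin--Wolf threshold $\alpha>4$ (note $(\alpha-1)(\alpha-4)=4-\alpha\beta$), not $\alpha>3$. No single-copy filter can do better: at blocklength one the only Eve symbol the two matching events can share is $\bot$, and GGK's inequality with the best sets reduces to $2>\sqrt{\alpha\beta}$.

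\textbf{The missing idea is genuinely multi-letter filtering.} The paper obtains it from GGK's $n$-letter achievability criterion. It uses coset filters $\mathcal A_1=E_r(L_m)$, $\mathcal A_2=\mathcal B_1=E_r(C_c)$, $\mathcal B_2=E_r(C_{-c})$ of the single-parity-check code $L_m\subset\Fthree^m$, with each coordinate repeated $r$ times.
\begin{itemize}
\item The overlap of the matching events comes from differences in $C_c$ with a zero coordinate, on which Eve cannot separate the two events. Their relative weight is about $m(2/\alpha)^r$.
\item The error events are dominated by the all-one word and by the words with two zeros.
\item Take $r=\lceil\kappa\ln m\rceil$ with $1/\ln(\alpha/2)<\kappa<1/\ln(\alpha\beta/4)$; this window is nonempty precisely because $\beta<2$. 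The GGK ratio then tends to $m^2/\binom m2\to2>1$.
\end{itemize}

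\textbf{The zero-cost half follows the paper's route but needs one correction.} If $\bar Z\in\{0,1,2\}$ and $\bot$ is split uniformly, each slice has diagonal $2/3$ (units $1/21$). Rank one then forces $\prod_iN_k(i+1,i)=(2/3)^3$, so $\sum_iN_k(i+1,i)\ge2$, with equality only for $\tfrac23$ times the all-ones matrix. At $\beta=2$, i.e.\ $\alpha=3$, every slice must be that matrix, and the excess $\alpha-2$ of each $\alpha$-cell has nowhere to go. The fix is to send all of $\bot$ and fractions $2/\alpha$, $2/\beta$ of the off-diagonal symbols to a single output. That output's slice is twice the all-ones matrix, and the remainders stay as single-cell outputs; this works iff $\min(\alpha,\beta)\ge2$.

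No regeneration of $Z$ is needed. Formation cost only asks that the public transcript be simulable from $Z$, which holds because $\bar Z$ is a degradation of $Z$. Announcing $\bar Z$ and sampling $X,Y$ independently from $P_{X\mid\bar Z},P_{Y\mid\bar Z}$ therefore gives $I_{\rm form}=0$ without invoking separability of $\sigma_\alpha$.
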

Thus, contrary to the evidence assembled at the time, no member of the
family has bound information.  A
key is distillable exactly when creating the distribution costs any
secrecy at all (Fig.~\ref{fig:alpha}).
The same holds for the quantum states $\sigma_\alpha$ themselves.
\begin{corollary}
\label{cor:quantum-key}
For every $3<\alpha\le5$, the key distillable from $\sigma_\alpha$ by
local operations and two-way public communication satisfies
$K_D^{\leftrightarrow}(\sigma_\alpha)\ge S(X;Y\Vert Z)>0$.  In
particular, all bound-entangled members, $3<\alpha\le4$, have
positive two-way distillable key.
\end{corollary}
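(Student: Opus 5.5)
The plan is to realize the classical source of Table~\ref{tab:gw} as the outcome of an LOCC-compatible procedure on $\sigma_\alpha$. We then argue that Eve's quantum side information is \emph{exactly} as powerful as the classical variable $Z$, so that every key-distillation protocol for $(X,Y,Z)$ is also a valid protocol for $\sigma_\alpha$. The positivity $S(X;Y\Vert Z)>0$ for $3<\alpha\le5$ is then taken directly from Theorem~\ref{thm:gw}.

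\emph{Step 1: Eve's post-measurement states.} Write $\sigma_\alpha=\sum_k\lambda_k|e_k\rangle\langle e_k|$ in its eigenbasis. The eigenvectors are $|\Phi\rangle=(|00\rangle+|11\rangle+|22\rangle)/\sqrt3$ with weight $\tfrac27$, and the six product vectors $|i,i\pm1\rangle$ with weights $\tfrac{\alpha}{21}$ and $\tfrac{5-\alpha}{21}$. Fix the purification $|\psi\rangle_{ABE}=\sum_k\sqrt{\lambda_k}\,|e_k\rangle_{AB}|k\rangle_E$. Since all purifications are related by an isometry on $E$, this choice is without loss of generality. Alice and Bob measure in the standard bases. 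An off-diagonal outcome $(x,y)$ with $x\ne y$ has nonzero overlap with exactly one eigenvector, namely $|x,y\rangle$. Eve's conditional state is therefore the pure state $|xy\rangle_E$. A diagonal outcome $(x,x)$ overlaps only with $|\Phi\rangle$, so Eve's conditional state is $|\Phi\rangle_E$, independent of $x$. The outcome probabilities reproduce Table~\ref{tab:gw}; this is a routine check.

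\emph{Step 2: the cq-state is classical.} Step~1 shows that Eve's conditional state is $|z(x,y)\rangle_E$. Here $z(\cdot)$ is precisely the labelling of Table~\ref{tab:gw}, and the seven vectors $|z\rangle_E$ are orthonormal. Hence
\[
\rho_{XYE}=\sum_{x,y}P_{XY}(x,y)\,|x\rangle\langle x|\otimes|y\rangle\langle y|\otimes|z(x,y)\rangle\langle z(x,y)|,
\]
which is isometrically the classical state of $(X,Y,Z)$ with $Z$ stored in an orthonormal basis. Any public-discussion protocol run on $n$ copies produces a final state of keys, transcript and $E^n$. This state equals, up to a local isometry on Eve's side, the corresponding classical state with $Z^n$. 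Trace-distance (composable) secrecy with respect to $E^n$ therefore coincides with secrecy with respect to $Z^n$.

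\emph{Step 3: conclusion and the delicate point.} The standard-basis measurements followed by a classical protocol constitute local operations and two-way public communication. By Step~2 they achieve any rate achievable for $(X,Y,Z)$, so $K_D^{\leftrightarrow}(\sigma_\alpha)\ge S(X;Y\Vert Z)$. Theorem~\ref{thm:gw} then gives positivity on $3<\alpha\le5$, in particular on the bound-entangled interval $3<\alpha\le4$. The main point needing care is the secrecy convention. $K_D^{\leftrightarrow}$ is defined with a trace-distance criterion, whereas $S$ may be defined with weak or strong mutual-information secrecy. I would invoke the equivalence of these definitions recalled in Appendix~\ref{sec:conventions}, in particular that weak and strong secret-key rates coincide. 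With that equivalence, the rate $S$ is achievable with the criterion used for $K_D^{\leftrightarrow}$, and the classical-to-quantum transfer in Step~2 is exact.
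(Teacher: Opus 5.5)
Your proposal is correct and follows essentially the same route as the paper: a canonical purification with orthonormal flags on Eve's system, standard-basis measurements that leave Eve with an orthogonal (isometric) encoding of $Z$, so that classical and quantum trace-distance secrecy coincide for every protocol, and positivity taken from Theorem~\ref{thm:gw}. Your handling of the secrecy conventions through Appendix~\ref{sec:conventions} also matches the paper; since $S$ is defined there with strong mutual-information secrecy, Pinsker's inequality alone already gives the trace-distance criterion, so you do not need the weak/strong equivalence.
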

The proof of Theorem~\ref{thm:gw} is given in Appendix~\ref{sec:quantum-candidate} and that of Corollary~\ref{cor:quantum-key}
in Appendix~\ref{sec:gw-quantum-key}.  The cases $2\le\alpha\le3$ and $\alpha>4$ are already
contained in Ref.~\cite{GisinWolf2000}.  For $3<\alpha\le4$, finite-block
filters based on cosets of ternary parity-check codes satisfy GGK's
positive-rate criterion; the two-copy example at $\alpha=4$ and the
general construction are given in Appendices~\ref{sec:two-copy}--\ref{sec:parity-family}.

\emph{Conclusion.---}%
Is there bound information?  The question in the title of
Ref.~\cite{GisinWolf2000} now has an affirmative answer.  The
two-bit--one-trit source of Table~\ref{tab:source} costs secret bits to
create although none can be distilled from it.  The irreversibility that
bound entanglement displays, a positive creation cost with zero
distillable yield, is therefore not an exclusively quantum phenomenon.  It
exists already in classical information theory.

\emph{Physical implications.---}%
Classical and quantum information theory have inspired each other from
the start.  Many classical concepts, such as entropy, channels, and
coding, have quantum counterparts, and conversely, entanglement
manipulation, teleportation, and the no-cloning theorem have found
classical counterparts~\cite{Nielsen1999,JonathanPlenio1999,Spekkens2007,OzolsSmithSmolin2014,ChitambarFortescueHsieh2015,Daffertshofer2002,DelSantoGisin2025}.
Quantum information theory has also been related to other areas of
physics, in particular to thermodynamics, where the bookkeeping of
entanglement parallels that of energy~\cite{HorodeckiThermo1998}.  A
classical counterpart of a quantum phenomenon shows which part of the
phenomenon is genuinely quantum.  For example, nontrivial probability
distributions cannot in general be copied~\cite{Daffertshofer2002}, so
copying already fails classically when knowledge is incomplete.  What is
genuinely quantum is that the prohibition extends to pure states, where
knowledge is maximal.

Bound information is the classical counterpart of bound entanglement,
whose existence in nature was posed as a question at its
discovery~\cite{Horodecki1998}.  As in the case of cloning, the
counterpart separates the quantum part of the phenomenon from the
classical part.  The irreversibility itself, a positive creation cost
with zero yield, exists classically as well and is therefore not the
quantum part.  What is quantum is where the irreversibility resides.
Classically it depends on the specification of the eavesdropper,
because the same correlations between Alice and Bob can be created
without any secrecy against one eavesdropper, while against another
they cost secrecy that can never be recovered.  In our construction,
the first is the fictitious $J$ and the second the actual $Z$.  A
quantum state, by contrast, carries the eavesdropper within.  All
purifications of a bipartite state are equivalent up to an isometry on
Eve's system, so the bipartite state alone fixes the most any adversary can know,
and bound entanglement is a property of that state itself.  This has no
classical counterpart.  For this family, after Alice and Bob's
standard-basis measurements, Eve's full quantum side information is an
orthogonal encoding of $Z$.  The resulting key is therefore secure even
against a purifying adversary (Corollary~\ref{cor:quantum-key}), although
no singlets can be distilled from the bound-entangled members.
Bound entanglement thus provides an instructive example of how a
classical counterpart isolates what is genuinely quantum.

\emph{Outlook.---}%
The proof of the vanishing rate in Theorem~\ref{thm:bound-information}
suggests a general tool.  If the infimum in Eq.~\eqref{eq:intrinsic} is
taken not only over the variables Eve can simulate but over all variables
her channel dominates in the less-noisy sense, one obtains a smaller
upper bound on the secret-key rate, by Eq.~\eqref{eq:GGK}.  This bound
lies between the intrinsic information and the bound of Gohari and
Anantharam~\cite{GohariAnantharam2010}.  On our source it vanishes, while
the intrinsic information stays positive.  The same idea can be applied
to quantum states.  Comparing eavesdroppers by mutual information rather
than by simulation may lead to new bounds on distillable key and
entanglement, as the intrinsic information once led to squashed
entanglement~\cite{ChristandlWinter2004}.

Finally, our source makes the question of activation concrete.  Bound
entanglement can be activated, in the sense that bound-entangled states,
yielding no distillable singlets on their own, help to distill
entanglement from other states.
The Horodecki family of Fig.~\ref{fig:alpha} was introduced to show
this~\cite{HorodeckiActivated1999}.  Whether bipartite bound information can be
activated as well has been asked~\cite{PretticoAcin2013} but could not
be settled without a proven example.  The source of
Table~\ref{tab:source} and the family of Appendix~\ref{sec:family} now provide explicit bound-information
resources on which to investigate activation.

\begin{acknowledgments}
\emph{Acknowledgments.---}%
The construction reported here emerged from an extended
exploratory dialogue with ChatGPT (OpenAI, GPT-5.6 Sol; accessed July 2026).
The authors directed the tool through iterative prompts to test
conjectures, search for counterexamples, and check intermediate
calculations.
Its decisive technical contribution was to identify
Proposition~1 of Gohari, G{\"u}nl{\"u}, and
Kramer~\cite{GohariGunluKramer2020} in response to the authors' targeted
question about an exact zero-key comparison valid for arbitrary
blocklengths and arbitrary interactive public discussion.  This
result from the information-theoretic key-agreement literature supplied
precisely the missing tool and opened the less-noisy but non-degraded
proof strategy used here.
Every AI-assisted argument was independently reconstructed
and verified line by line, with exact symbolic checks where applicable.
The authors take full responsibility for the results.
We warmly thank Stefan Wolf for many long and inspiring discussions on
bound information, reaching back more than twenty-five years to the time when the
conjecture first took shape.  Stefan chose not to join as a coauthor.
Before putting his name to work in which AI tools played a part, he
prefers to take the time to reflect on their rapid development and on how
they should enter scientific practice.  This work was supported by the
Swiss National Science Foundation through project No.~20QU-1\_225171.
\end{acknowledgments}

\paragraph*{Data availability.}
This is a purely mathematical work; no data were created or analyzed in this
study.  The finite-block examples in Appendix~\ref{sec:parity-family} can be reproduced using the accompanying Python
script.

\nocite{HorodeckiPankowski2008}
\bibliography{references}

@article{Maurer1993,
  author = {Ueli M. Maurer},
  title = {Secret Key Agreement by Public Discussion from Common Information},
  journal = {IEEE Trans. Inf. Theory},
  volume = {39},
  number = {3},
  pages = {733--742},
  year = {1993},
  doi = {10.1109/18.256484}
}

@article{AhlswedeCsiszar1993,
  author = {Rudolf Ahlswede and Imre Csisz{\'a}r},
  title = {Common Randomness in Information Theory and Cryptography. Part I: Secret Sharing},
  journal = {IEEE Trans. Inf. Theory},
  volume = {39},
  number = {4},
  pages = {1121--1132},
  year = {1993},
  doi = {10.1109/18.243431}
}

@article{Horodecki1998,
  author = {Micha{\l} Horodecki and Pawe{\l} Horodecki and Ryszard Horodecki},
  title = {Mixed-State Entanglement and Distillation: Is There a ``Bound'' Entanglement in Nature?},
  journal = {Phys. Rev. Lett.},
  volume = {80},
  number = {24},
  pages = {5239--5242},
  year = {1998},
  doi = {10.1103/PhysRevLett.80.5239}
}

@article{MaurerWolf1999,
  author = {Ueli Maurer and Stefan Wolf},
  title = {Unconditionally Secure Key Agreement and the Intrinsic Conditional Information},
  journal = {IEEE Trans. Inf. Theory},
  volume = {45},
  number = {2},
  pages = {499--514},
  year = {1999},
  doi = {10.1109/18.748999}
}

@inproceedings{GisinWolf2000,
  author = {Nicolas Gisin and Stefan Wolf},
  title = {Linking Classical and Quantum Key Agreement: Is There ``Bound Information''?},
  booktitle = {Advances in Cryptology---CRYPTO 2000},
  series = {Lecture Notes in Computer Science},
  volume = {1880},
  pages = {482--500},
  publisher = {Springer},
  year = {2000},
  doi = {10.1007/3-540-44598-6_30},
  eprint = {quant-ph/0005042},
  archivePrefix = {arXiv}
}

@inproceedings{MaurerWolf2000,
  author = {Ueli Maurer and Stefan Wolf},
  title = {Information-Theoretic Key Agreement: From Weak to Strong Secrecy for Free},
  booktitle = {Advances in Cryptology---EUROCRYPT 2000},
  series = {Lecture Notes in Computer Science},
  volume = {1807},
  pages = {351--368},
  publisher = {Springer},
  year = {2000},
  doi = {10.1007/3-540-45539-6_24}
}

@article{GisinRennerWolf2002,
  author = {Nicolas Gisin and Renato Renner and Stefan Wolf},
  title = {Linking Classical and Quantum Key Agreement: Is There a Classical Analog to Bound Entanglement?},
  journal = {Algorithmica},
  volume = {34},
  number = {4},
  pages = {389--412},
  year = {2002},
  doi = {10.1007/s00453-002-0972-7}
}

@incollection{GisinRennerWolf2001,
  author = {Nicolas Gisin and Renato Renner and Stefan Wolf},
  title = {Bound Information: The Classical Analog to Bound Quantum Entanglement},
  booktitle = {European Congress of Mathematics, Barcelona 2000},
  editor = {Carles Casacuberta and Rosa Maria Mir{\'o}-Roig and Joan Verdera and Sebasti{\`a} Xamb{\'o}-Descamps},
  pages = {439--447},
  publisher = {Birkh{\"a}user},
  address = {Basel},
  year = {2001},
  doi = {10.1007/978-3-0348-8266-8_38}
}

@inproceedings{RennerWolf2003,
  author = {Renato Renner and Stefan Wolf},
  title = {New Bounds in Secret-Key Agreement: The Gap between Formation and Secrecy Extraction},
  booktitle = {Advances in Cryptology---EUROCRYPT 2003},
  series = {Lecture Notes in Computer Science},
  volume = {2656},
  pages = {562--577},
  publisher = {Springer},
  year = {2003},
  doi = {10.1007/3-540-39200-9_35}
}

@inproceedings{ChristandlRennerWolf2003,
  author = {Matthias Christandl and Renato Renner and Stefan Wolf},
  title = {A Property of the Intrinsic Mutual Information},
  booktitle = {Proceedings of the 2003 IEEE International Symposium on Information Theory},
  pages = {258},
  publisher = {IEEE},
  year = {2003},
  doi = {10.1109/ISIT.2003.1228272}
}

@article{AcinCiracMasanes2004,
  author = {Antonio Ac{\'i}n and J. Ignacio Cirac and Llu{\'i}s Masanes},
  title = {Multipartite Bound Information Exists and Can Be Activated},
  journal = {Phys. Rev. Lett.},
  volume = {92},
  pages = {107903},
  year = {2004},
  doi = {10.1103/PhysRevLett.92.107903}
}

@article{HorodeckiEtAl2005,
  author = {Karol Horodecki and Micha{\l} Horodecki and Pawe{\l} Horodecki and Jonathan Oppenheim},
  title = {Secure Key from Bound Entanglement},
  journal = {Phys. Rev. Lett.},
  volume = {94},
  pages = {160502},
  year = {2005},
  doi = {10.1103/PhysRevLett.94.160502}
}

@article{Daffertshofer2002,
  author = {A. Daffertshofer and A. R. Plastino and A. Plastino},
  title = {Classical No-Cloning Theorem},
  journal = {Phys. Rev. Lett.},
  volume = {88},
  pages = {210601},
  year = {2002},
  doi = {10.1103/PhysRevLett.88.210601}
}

@article{ChristandlWinter2004,
  author = {Matthias Christandl and Andreas Winter},
  title = {``{S}quashed Entanglement'': An Additive Entanglement Measure},
  journal = {J. Math. Phys.},
  volume = {45},
  pages = {829--840},
  year = {2004},
  doi = {10.1063/1.1643788}
}

@article{HorodeckiEtAl2009,
  author = {Karol Horodecki and Micha{\l} Horodecki and Pawe{\l} Horodecki and Jonathan Oppenheim},
  title = {General Paradigm for Distilling Classical Key From Quantum States},
  journal = {IEEE Trans. Inf. Theory},
  volume = {55},
  pages = {1898--1929},
  year = {2009},
  doi = {10.1109/TIT.2008.2009798}
}

@article{HorodeckiPankowski2008,
  author = {Karol Horodecki and {\L}ukasz Pankowski and Micha{\l} Horodecki and Pawe{\l} Horodecki},
  title = {Low-Dimensional Bound Entanglement with One-Way Distillable Cryptographic Key},
  journal = {IEEE Trans. Inf. Theory},
  volume = {54},
  pages = {2621--2625},
  year = {2008},
  doi = {10.1109/TIT.2008.921709}
}

@phdthesis{RennerThesis2005,
  author = {Renato Renner},
  title = {Security of Quantum Key Distribution},
  school = {ETH Zurich},
  address = {Zurich},
  year = {2005},
  doi = {10.3929/ethz-a-005115027},
  eprint = {quant-ph/0512258},
  archivePrefix = {arXiv}
}

@article{Schroedinger1935,
  author = {Erwin Schr{\"o}dinger},
  title = {Discussion of Probability Relations between Separated Systems},
  journal = {Math. Proc. Cambridge Philos. Soc.},
  volume = {31},
  number = {4},
  pages = {555--563},
  year = {1935},
  doi = {10.1017/S0305004100013554}
}

@article{BennettTeleportation1993,
  author = {Charles H. Bennett and Gilles Brassard and Claude Cr{\'e}peau and Richard Jozsa and Asher Peres and William K. Wootters},
  title = {Teleporting an Unknown Quantum State via Dual Classical and {E}instein-{P}odolsky-{R}osen Channels},
  journal = {Phys. Rev. Lett.},
  volume = {70},
  number = {13},
  pages = {1895--1899},
  year = {1993},
  doi = {10.1103/PhysRevLett.70.1895}
}

@article{Ekert1991,
  author = {Artur K. Ekert},
  title = {Quantum Cryptography Based on {B}ell's Theorem},
  journal = {Phys. Rev. Lett.},
  volume = {67},
  number = {6},
  pages = {661--663},
  year = {1991},
  doi = {10.1103/PhysRevLett.67.661}
}

@inproceedings{BenOrEtAl2005,
  author = {Michael Ben-Or and Micha{\l} Horodecki and Debbie W. Leung and Dominic Mayers and Jonathan Oppenheim},
  title = {The Universal Composable Security of Quantum Key Distribution},
  booktitle = {Theory of Cryptography (TCC 2005)},
  series = {Lecture Notes in Computer Science},
  volume = {3378},
  pages = {386--406},
  publisher = {Springer},
  year = {2005},
  doi = {10.1007/978-3-540-30576-7_21}
}

@article{MasanesAcin2006,
  author = {Llu{\'i}s Masanes and Antonio Ac{\'i}n},
  title = {Multipartite Secret Correlations and Bound Information},
  journal = {IEEE Trans. Inf. Theory},
  volume = {52},
  number = {10},
  pages = {4686--4694},
  year = {2006},
  doi = {10.1109/TIT.2006.881711}
}

@article{MasanesWinter2010,
  author = {Llu{\'i}s Masanes and Andreas Winter},
  title = {A Non-Distillability Criterion for Secret Correlations},
  journal = {Quantum Inf. Comput.},
  volume = {10},
  number = {1--2},
  pages = {152--159},
  year = {2010},
  doi = {10.26421/QIC10.1-2-11}
}

@article{PretticoAcin2013,
  author = {Giuseppe Prettico and Antonio Ac{\'i}n},
  title = {Can Bipartite Classical Information Be Activated?},
  journal = {Quantum Inf. Comput.},
  volume = {13},
  number = {3--4},
  pages = {245--265},
  year = {2013},
  doi = {10.26421/QIC13.3-4-6}
}

@article{GohariGunluKramer2020,
  author = {Amin Gohari and Onur G{\"u}nl{\"u} and Gerhard Kramer},
  title = {Coding for Positive Rate in the Source Model Key Agreement Problem},
  journal = {IEEE Trans. Inf. Theory},
  volume = {66},
  number = {10},
  pages = {6303--6323},
  year = {2020},
  doi = {10.1109/TIT.2020.2990750}
}

@article{Nielsen1999,
  author = {Michael A. Nielsen},
  title = {Conditions for a Class of Entanglement Transformations},
  journal = {Phys. Rev. Lett.},
  volume = {83},
  number = {2},
  pages = {436--439},
  year = {1999},
  doi = {10.1103/PhysRevLett.83.436}
}

@article{JonathanPlenio1999,
  author = {Daniel Jonathan and Martin B. Plenio},
  title = {Entanglement-Assisted Local Manipulation of Pure Quantum States},
  journal = {Phys. Rev. Lett.},
  volume = {83},
  number = {17},
  pages = {3566--3569},
  year = {1999},
  doi = {10.1103/PhysRevLett.83.3566}
}

@article{GisinEtAlQKD2002,
  author = {Nicolas Gisin and Gr{\'e}goire Ribordy and Wolfgang Tittel and Hugo Zbinden},
  title = {Quantum Cryptography},
  journal = {Rev. Mod. Phys.},
  volume = {74},
  number = {1},
  pages = {145--195},
  year = {2002},
  doi = {10.1103/RevModPhys.74.145}
}

@article{Spekkens2007,
  author = {Robert W. Spekkens},
  title = {Evidence for the Epistemic View of Quantum States: A Toy Theory},
  journal = {Phys. Rev. A},
  volume = {75},
  number = {3},
  pages = {032110},
  year = {2007},
  doi = {10.1103/PhysRevA.75.032110}
}

@article{HorodeckiReview2009,
  author = {Ryszard Horodecki and Pawe{\l} Horodecki and Micha{\l} Horodecki and Karol Horodecki},
  title = {Quantum Entanglement},
  journal = {Rev. Mod. Phys.},
  volume = {81},
  number = {2},
  pages = {865--942},
  year = {2009},
  doi = {10.1103/RevModPhys.81.865}
}

@article{PortmannRenner2022,
  author = {Christopher Portmann and Renato Renner},
  title = {Security in Quantum Cryptography},
  journal = {Rev. Mod. Phys.},
  volume = {94},
  number = {2},
  pages = {025008},
  year = {2022},
  doi = {10.1103/RevModPhys.94.025008}
}

@article{DelSantoGisin2025,
  author = {Del Santo, Flavio and Gisin, Nicolas},
  title = {Which Features of Quantum Physics Are Not Fundamentally Quantum but Are Due to Indeterminism?},
  journal = {Quantum},
  volume = {9},
  pages = {1686},
  year = {2025},
  doi = {10.22331/q-2025-04-03-1686}
}

@incollection{KornerMarton1977,
  author = {J{\'a}nos K{\"o}rner and Katalin Marton},
  title = {Comparison of Two Noisy Channels},
  booktitle = {Topics in Information Theory},
  series = {Colloquia Mathematica Societatis J{\'a}nos Bolyai},
  volume = {16},
  pages = {411--423},
  publisher = {North-Holland},
  address = {Amsterdam},
  year = {1977}
}

@article{AhlswedeGacs1976,
  author = {Rudolf Ahlswede and Peter G{\'a}cs},
  title = {Spreading of Sets in Product Spaces and Hypercontraction of the Markov Operator},
  journal = {Ann. Probab.},
  volume = {4},
  number = {6},
  pages = {925--939},
  year = {1976},
  doi = {10.1214/aop/1176995937}
}

@article{GohariAnantharam2010,
  author = {Amin A. Gohari and Venkat Anantharam},
  title = {Information-Theoretic Key Agreement of Multiple Terminals---Part I},
  journal = {IEEE Trans. Inf. Theory},
  volume = {56},
  number = {8},
  pages = {3973--3996},
  year = {2010},
  doi = {10.1109/TIT.2010.2050832}
}

@article{HorodeckiActivated1999,
  author = {Pawe{\l} Horodecki and Micha{\l} Horodecki and Ryszard Horodecki},
  title = {Bound Entanglement Can Be Activated},
  journal = {Phys. Rev. Lett.},
  volume = {82},
  number = {5},
  pages = {1056--1059},
  year = {1999},
  doi = {10.1103/PhysRevLett.82.1056}
}

@article{HorodeckiPPT1997,
  author = {Pawe{\l} Horodecki},
  title = {Separability Criterion and Inseparable Mixed States with Positive Partial Transposition},
  journal = {Phys. Lett. A},
  volume = {232},
  number = {5},
  pages = {333--339},
  year = {1997},
  doi = {10.1016/S0375-9601(97)00416-7}
}

@mastersthesis{RennerDiploma2000,
  author = {Renato Renner},
  title = {Linking Information Theoretic Secret-Key Agreement and Quantum Purification},
  school = {ETH Zurich},
  type = {Diploma thesis},
  address = {Zurich},
  month = aug,
  year = {2000}
}

@article{OzolsSmithSmolin2014,
  author = {Maris Ozols and Graeme Smith and John A. Smolin},
  title = {Bound Entangled States with a Private Key and Their Classical Counterpart},
  journal = {Phys. Rev. Lett.},
  volume = {112},
  pages = {110502},
  year = {2014},
  doi = {10.1103/PhysRevLett.112.110502}
}

@article{ChitambarFortescueHsieh2015,
  author = {Eric Chitambar and Ben Fortescue and Min-Hsiu Hsieh},
  title = {Classical Analog to Entanglement Reversibility},
  journal = {Phys. Rev. Lett.},
  volume = {115},
  pages = {090501},
  year = {2015},
  doi = {10.1103/PhysRevLett.115.090501}
}

@article{CollinsPopescu2002,
  author = {Daniel Collins and Sandu Popescu},
  title = {Classical Analog of Entanglement},
  journal = {Phys. Rev. A},
  volume = {65},
  pages = {032321},
  year = {2002},
  doi = {10.1103/PhysRevA.65.032321}
}

@inproceedings{ChristandlEtAl2007,
  author = {Matthias Christandl and Artur Ekert and Micha{\l} Horodecki and Pawe{\l} Horodecki and Jonathan Oppenheim and Renato Renner},
  title = {Unifying Classical and Quantum Key Distillation},
  booktitle = {Theory of Cryptography (TCC 2007)},
  series = {Lecture Notes in Computer Science},
  volume = {4392},
  pages = {456--478},
  publisher = {Springer},
  year = {2007},
  doi = {10.1007/978-3-540-70936-7_25}
}

@misc{KhesinTungVedula2023,
  author = {Andrey Boris Khesin and Andrew Tung and Karthik Vedula},
  title = {New Properties of Intrinsic Information and Their Relation to Bound Secrecy},
  year = {2023},
  eprint = {2308.09031},
  archivePrefix = {arXiv},
  primaryClass = {cs.IT},
  doi = {10.48550/arXiv.2308.09031}
}

@inproceedings{RennerWolf2002,
  author = {Renato Renner and Stefan Wolf},
  title = {Towards Proving the Existence of ``Bound'' Information},
  booktitle = {Proceedings of the 2002 IEEE International Symposium on Information Theory},
  pages = {103},
  publisher = {IEEE},
  year = {2002},
  doi = {10.1109/ISIT.2002.1023375}
}

@inproceedings{RennerSkripskyWolf2003,
  author = {Renato Renner and Juraj Skripsky and Stefan Wolf},
  title = {A New Measure for Conditional Mutual Information and Its Properties},
  booktitle = {Proceedings of the 2003 IEEE International Symposium on Information Theory},
  pages = {259},
  publisher = {IEEE},
  year = {2003},
  doi = {10.1109/ISIT.2003.1228273}
}

@article{KhatriLutkenhaus2017,
  author = {Sumeet Khatri and Norbert L{\"u}tkenhaus},
  title = {Numerical Evidence for Bound Secrecy from Two-Way Postprocessing in Quantum Key Distribution},
  journal = {Phys. Rev. A},
  volume = {95},
  pages = {042320},
  year = {2017},
  doi = {10.1103/PhysRevA.95.042320}
}

@article{Blackwell1953,
  author = {David Blackwell},
  title = {Equivalent Comparisons of Experiments},
  journal = {Ann. Math. Stat.},
  volume = {24},
  number = {2},
  pages = {265--272},
  year = {1953},
  doi = {10.1214/aoms/1177729032}
}

@article{HorodeckiThermo1998,
  author = {Pawe{\l} Horodecki and Ryszard Horodecki and Micha{\l} Horodecki},
  title = {Entanglement and Thermodynamical Analogies},
  journal = {Acta Phys. Slovaca},
  volume = {48},
  pages = {141--156},
  year = {1998},
  eprint = {quant-ph/9805072},
  archivePrefix = {arXiv}
}

\clearpage
\onecolumngrid
\appendix
\setcounter{secnumdepth}{2}
\setcounter{section}{0}
\renewcommand{\thesection}{\Alph{section}}
\renewcommand{\thesubsection}{\thesection.\arabic{subsection}}
\makeatletter
\renewcommand{\p@subsection}{}
\makeatother
\numberwithin{equation}{section}
\numberwithin{theorem}{section}
\section{Conventions and proof map}
\label{sec:conventions}

We specify the security definitions and outline the proofs.

All alphabets are finite and logarithms are base two.  Alice and Bob observe
$n$ independent copies of $(X,Y)$, have private randomness, and exchange an
arbitrary finite-round authenticated public transcript $F$.  Their outputs
$K_A,K_B$, with values in a set $\mathcal K_n$, must agree with probability tending to one, be asymptotically
uniform in the sense that
\(\log|\mathcal K_n|-H(K_A)\to0\), and obey strong mutual-information secrecy
$I(K_A;Z^n,F)\to0$.  The rate of such a sequence of protocols is
$\liminf_{n\to\infty}\frac1n\log|\mathcal K_n|$, and the supremum of
achievable rates is denoted $S(X;Y\Vert Z)$.  The usual weak and strong secrecy definitions give the
same asymptotic rate for finite sources~\cite{MaurerWolf2000}.  The
comparison proof in Appendix~\ref{sec:capacity-comparison} shows
that replacing $Z^n$ by $J^n$ does not increase the leakage.  It therefore applies under either
normalization.  The
composable, distance-based definition of secrecy~\cite{RennerThesis2005},
which requires the generated key, jointly with Eve's view, to be close in
statistical distance to an ideal key, yields the same rate as well, since
distance-security implies weak mutual-information security by continuity
of entropy, weak and strong mutual-information security coincide by
Ref.~\cite{MaurerWolf2000}, and strong mutual-information security implies
distance-security via Pinsker's inequality.

For a nonnegative matrix $M=(M(x,y))_{x,y}$, a \emph{slice} means an
unnormalized joint law such as
\begin{equation}
 M_z(x,y):=P(X=x,Y=y,Z=z).
\end{equation}
A nonzero slice represents a conditionally independent law exactly when it
has matrix rank one.

The proof of the main result has four logically separate parts:
\begin{enumerate}
 \item an auxiliary variable $J$ makes $X$ and $Y$ conditionally independent;
 \item the actual Eve channel is less noisy than the $J$ channel;
 \item less-noisy dominance implies
 $S(X;Y\Vert Z)\le S(X;Y\Vert J)=0$;
 \item every degradation of $Z$ that made $X,Y$ independent would have to
 simulate $J$, but no channel $Z\to J$ exists.
\end{enumerate}
Appendices~\ref{sec:bsc} and \ref{sec:capacity-comparison} prove the two
information-theoretic ingredients used in parts 2 and 3.  Appendices~\ref{sec:exact-source}--\ref{sec:intrinsic-proof} verify the finite source and
parts 1 and 4.  Appendix~\ref{sec:finite-zero} proves the compactness statement
that turns the absence of a channel making $X$ and $Y$
conditionally independent into strictly positive
intrinsic information.  Appendix~\ref{sec:family} gives a two-parameter family containing
the rational example.  Appendix~\ref{sec:renwol-candidate} explains why the
same zero-rate argument does not settle a distinct historical candidate of
Renner and Wolf.  Appendix~\ref{sec:quantum-candidate} determines the whole
standard-basis family of Gisin and Wolf, obtained from the Horodecki
states, and includes a two-copy example at $\alpha=4$.
Section~\ref{sec:gw-quantum-key} derives positive distillable
key for the underlying entangled states.
Appendix~\ref{sec:horodecki-bi-measurement} exhibits alternative measurements of purifications of the same qutrit states whose outcomes do carry bound information.

We use two results of GGK
\cite{GohariGunluKramer2020}.  Their Proposition~1 is the rate comparison
$Z\succeq_{\rm ln}J\Rightarrow
S(X;Y\Vert Z)\le S(X;Y\Vert J)$.  Together with $X\indep Y\mid J$, it
proves that the new source has a zero secret-key rate.
Appendix~\ref{sec:capacity-comparison} proves the comparison independently.  Their
Theorem~4 is an $n$-letter positive-rate characterization.  We use its
achievability direction in Appendix~\ref{sec:quantum-candidate}.  Finite-block
sets satisfying the GGK inequality prove a positive asymptotic rate.
Failure of the inequality for a particular choice of sets does not imply
a zero rate.

\section{The BSC contraction and the less-noisy order}
\label{sec:bsc}

This appendix proves the two channel inequalities used in the main text,
an exact identity for the erasure channel and a strong data-processing
inequality for the flip channel.  Both are statements about the
less-noisy order, which we first recall.
The binary erasure channel $\BEC_\epsilon$ transmits its input bit with
probability $1-\epsilon$ and outputs the erasure symbol $\bot$ otherwise.
The binary symmetric channel $\BSC_\delta$ flips its input bit with
probability $\delta$.

\begin{definition}[Less-noisy order \cite{KornerMarton1977}]
Let $P_{Z\mid T}$ and $P_{J\mid T}$ be two channels with the same finite
input alphabet.  We write
\begin{equation}
 P_{Z\mid T}\succeq_{\mathrm{ln}}P_{J\mid T}
\end{equation}
if $I(U;Z)\ge I(U;J)$ for every finite $U$, every law $P_{UT}$, and outputs
generated through the respective channels.  Only the two marginal channels
are relevant.  No particular coupling of $Z$ and $J$ is required.
\end{definition}

\begin{lemma}[Erasure identity]
\label{lem:bec}
If $Z=\BEC_\epsilon(L)$ and the erasure event is independent of $(U,L)$,
then
\begin{equation}
 I(U;Z)=(1-\epsilon)I(U;L).
\end{equation}
\end{lemma}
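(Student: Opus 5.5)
The plan is to introduce the erasure indicator $E\in\{0,1\}$, with $E=1$ meaning $Z=\bot$, and to expand $I(U;Z)$ with the chain rule. The setup requires the erasure event to be independent of $(U,L)$, so $P(E=1)=\epsilon$ and $E\indep(U,L)$. Since $E$ is a deterministic function of $Z$, we have $I(U;Z)=I(U;Z,E)$. The chain rule then gives
\begin{equation}
 I(U;Z)=I(U;E)+I(U;Z\mid E).
\end{equation}

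The first term vanishes because $E\indep U$. The second term splits according to the value of $E$:
\begin{equation}
 I(U;Z\mid E)=\epsilon\, I(U;Z\mid E=1)+(1-\epsilon)\, I(U;Z\mid E=0).
\end{equation}
On $E=1$ the output is the constant $\bot$, so $I(U;Z\mid E=1)=0$. On $E=0$ we have $Z=L$, so $I(U;Z\mid E=0)=I(U;L\mid E=0)$. Because $E\indep(U,L)$, the conditional law of $(U,L)$ given $E=0$ equals its unconditional law, and hence $I(U;L\mid E=0)=I(U;L)$. Combining these gives $I(U;Z)=(1-\epsilon)I(U;L)$.

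The only point needing care is the independence hypothesis. Independence of $E$ from $L$ alone would not suffice; we need the joint independence $E\indep(U,L)$, which is exactly what is assumed. In the main-text application this holds because the erasure uses fresh randomness, independent of $(U,X,Y,L)$. No other step poses an obstacle, since everything reduces to the chain rule and conditioning on an independent event.
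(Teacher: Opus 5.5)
Your proof is correct and follows essentially the same approach as the paper. The paper's proof is a compressed version of your argument: the erasure flag is a function of $Z$ and independent of $U$, so the chain rule reduces $I(U;Z)$ to $(1-\epsilon)\,I(U;L\mid E=0)=(1-\epsilon)\,I(U;L)$.
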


\begin{proof}
The erasure flag is visible in $Z$ and independent of $U$.  Conditional on a
nonerasure $Z$ reveals $L$, while conditional on an erasure it is constant.
The chain rule gives the result.
\end{proof}

\begin{lemma}[BSC strong data-processing inequality
\cite{AhlswedeGacs1976}]
\label{lem:bsc}
Let $L\in\{0,1\}$ and $J=L\oplus N$, where
$N\sim\Bern(\delta)$ is independent of $(U,L)$ and
$0\le\delta\le1/2$.  Then
\begin{equation}
 I(U;J)\le(1-2\delta)^2I(U;L).
 \label{eq:bsc-sdpi}
\end{equation}
\end{lemma}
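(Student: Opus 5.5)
Since $(U,L,J)$ forms a Markov chain $U\to L\to J$, the inequality can be reduced to a statement about a single binary input.  Write $I(U;J)=H(J)-H(J\mid U)$ and $I(U;L)=H(L)-H(L\mid U)$.  For each value $u$, let $p_u=P(L{=}1\mid U{=}u)$; then $P(J{=}1\mid U{=}u)=p_u*\delta:=p_u(1-\delta)+(1-p_u)\delta$.  With $p=\sum_u P(u)p_u$ we have
\begin{equation}
 I(U;J)=h(p*\delta)-\sum_u P(u)\,h(p_u*\delta),\qquad
 I(U;L)=h(p)-\sum_u P(u)\,h(p_u),
\end{equation}
where $h$ is the binary entropy.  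Setting $c=(1-2\delta)^2$, it therefore suffices to show that the function
\begin{equation}
 \phi(q):=c\,h(q)-h(q*\delta)
\end{equation}
satisfies $\sum_u P(u)\phi(p_u)\le\phi(p)$, i.e.\ that the concave envelope of $\phi$ at $p$ equals $\phi(p)$ for every mixture.  The claim $I(U;J)\le cI(U;L)$ is exactly $\phi(p)-\sum_uP(u)\phi(p_u)\ge0$.  Equivalently, one can work with the divergence form $D(P_{J\mid u}\Vert P_J)\le c\,D(P_{L\mid u}\Vert P_L)$ summed over $u$.  This form is cleaner, because the contraction coefficient for mutual information equals the KL contraction coefficient $\eta_{\rm KL}=\sup_{P\neq Q}D(PW\Vert QW)/D(P\Vert Q)$.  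Taking $P=P_{L\mid u}$ and $Q=P_L$ shows that $\eta_{\rm KL}\le c$ suffices.

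The plan is thus to prove $D(\Bern(a*\delta)\Vert\Bern(b*\delta))\le(1-2\delta)^2D(\Bern(a)\Vert\Bern(b))$ for all $a,b\in[0,1]$.  Fix $b$ and set $g(a):=c\,d(a\Vert b)-d(a*\delta\Vert b*\delta)$, where $d$ is binary divergence.  Then $g(b)=0$ and $g'(b)=0$.  Since $(a*\delta)'=1-2\delta$, the second derivative is
\begin{equation}
 g''(a)=c\Big[\frac{1}{a(1-a)}-\frac{1}{(a*\delta)(1-a*\delta)}\Big]\ge0,
\end{equation}
because $a*\delta$ is closer to $1/2$ than $a$ is, so $(a*\delta)(1-a*\delta)\ge a(1-a)$.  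Hence $g$ is convex with a minimum value of $0$ at $a=b$, so $g\ge0$.  Summing this divergence inequality over $u$ with weights $P(u)$ gives \eqref{eq:bsc-sdpi}.

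The step to double-check is the second derivative computation.  The key observation is that the $b$-dependent terms of $d(a\Vert b)$ are affine in $a$, so they drop out of $g''$, leaving only the entropy curvatures $1/(a(1-a))$ and $(1-2\delta)^2/((a*\delta)(1-a*\delta))$.  After factoring out $c$, the bracket is nonnegative by the closeness-to-$1/2$ argument above.  Boundary cases $a\in\{0,1\}$ and $\delta=1/2$ follow by continuity or are trivial, since $J\indep U$ when $\delta=1/2$.
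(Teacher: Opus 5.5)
Your proof is correct and is essentially the paper's argument. The paper shows that $g(q)=h_2(q*\delta)-(1-2\delta)^2h_2(q)$, which is your $-\phi$, is convex. It uses the same curvature comparison, via $r(1-r)-q(1-q)=\delta(1-\delta)(1-2q)^2\ge0$, and then applies Jensen, so your first reduction to concavity of $\phi$ already closes with that computation. Your divergence function $c\,d(a\Vert b)-d(a*\delta\Vert b*\delta)$ is exactly this convex $g$ minus its tangent line at $b=P_L(1)$, so the KL-contraction step is Jensen's inequality written out via a supporting line rather than a different route.
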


\begin{proof}
Put $c=1-2\delta$ and $r(q)=\delta+cq$.  With
\begin{equation}
 q_u=P(L=1\mid U=u),\qquad q=\mathbb E[q_U],
\end{equation}
binary entropy gives
\begin{align}
 I(U;L)&=h_2(q)-\mathbb E h_2(q_U),\\
 I(U;J)&=h_2(r(q))-\mathbb E h_2(r(q_U)).
\end{align}
Define $g(q)=h_2(r(q))-c^2h_2(q)$.  For $0<q<1$,
\begin{equation}
 g''(q)=\frac{c^2}{\ln2}
 \left[
 \frac1{q(1-q)}-\frac1{r(q)(1-r(q))}
 \right].
 \label{eq:gsecond}
\end{equation}
Direct expansion shows
\begin{equation}
 r(q)(1-r(q))-q(1-q)
 =\delta(1-\delta)(1-2q)^2\ge0.
\end{equation}
Thus $g$ is convex.  Jensen's inequality
$\mathbb E g(q_U)\ge g(q)$, after rearrangement, is precisely
Eq.~\eqref{eq:bsc-sdpi}.  Boundary cases follow by continuity.
\end{proof}

\section{Secret-key rates under the less-noisy order}
\label{sec:capacity-comparison}

The next proposition is Proposition~1 of Gohari, G{\"u}nl{\"u}, and Kramer
(GGK) \cite{GohariGunluKramer2020}, which they also derive from the
interactive upper bound of Gohari and Anantharam
\cite{GohariAnantharam2010}.  We reproduce a proof to fix both the
orientation of the order and the security convention.

\begin{lemma}[Tensorization]
\label{lem:tensor}
If $P_{Z\mid T}\succeq_{\mathrm{ln}}P_{J\mid T}$, then for every $n$,
\begin{equation}
 P_{Z^n\mid T^n}\succeq_{\mathrm{ln}}P_{J^n\mid T^n}
\end{equation}
for the memoryless product channels.
\end{lemma}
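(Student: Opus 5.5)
We prove the tensorization by induction on $n$ via a telescoping hybrid argument. The case $n=1$ is the hypothesis.

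Fix a finite $U$ and a joint law $P_{UT^n}$. Generate $Z^n$ and $J^n$ coordinatewise through the product channels, using independent noise in each coordinate; only the marginals matter, so this coupling is harmless. For $i=0,\dots,n$ define the hybrid output
\begin{equation}
 W_i:=(Z_1,\dots,Z_i,J_{i+1},\dots,J_n).
\end{equation}
Then $W_0=J^n$ and $W_n=Z^n$. Telescoping gives
\begin{equation}
 I(U;Z^n)-I(U;J^n)=\sum_{i=1}^n\bigl[I(U;W_i)-I(U;W_{i-1})\bigr],
\end{equation}
so it suffices to show that each bracket is nonnegative. The hybrids $W_i$ and $W_{i-1}$ share the context
\begin{equation}
 V_i:=(Z_1,\dots,Z_{i-1},J_{i+1},\dots,J_n)
\end{equation}
and differ only in the $i$th coordinate, which is $Z_i$ in $W_i$ and $J_i$ in $W_{i-1}$. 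By the chain rule,
\begin{equation}
 I(U;W_i)-I(U;W_{i-1})=I(U;Z_i\mid V_i)-I(U;J_i\mid V_i),
\end{equation}
since the common term $I(U;V_i)$ cancels. Write $I(U;Z_i\mid V_i)=I(UV_i;Z_i)-I(V_i;Z_i)$ and similarly for $J_i$. The bracket then becomes
\begin{equation}
 \bigl[I(UV_i;Z_i)-I(UV_i;J_i)\bigr]-\bigl[I(V_i;Z_i)-I(V_i;J_i)\bigr].
\end{equation}

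\textbf{Key step.} Both $Z_i$ and $J_i$ must be outputs of the single-letter channel applied to $T_i$, with the auxiliary variables conditionally independent of the channel noise given $T_i$. By construction, $Z_i$ is produced from $T_i$ by fresh noise independent of $(U,T^n)$ and of all other coordinates' noise, and the same holds for $J_i$. Hence $(U,V_i)$ and $V_i$ are both valid auxiliaries for the input $T_i$: $(U,V_i)-T_i-Z_i$ and $(U,V_i)-T_i-J_i$ are Markov chains with the correct channels, even though $V_i$ depends on the other coordinates of $T^n$.

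This is not yet enough, because the difference of two nonnegative brackets need not be nonnegative. The standard fix is to condition on $V_i=v$ rather than split the mutual information. For each $v$, the conditional law $P_{UT_i\mid V_i=v}$ is a legitimate input law. Moreover, $Z_i$ given $(T_i,V_i=v)$ is still distributed as $P_{Z\mid T}$, because the $i$th noise is independent of everything entering $V_i$ (other coordinates and their noises). Applying the single-letter hypothesis to this conditional law gives
\begin{equation}
 I(U;Z_i\mid V_i=v)\ge I(U;J_i\mid V_i=v).
\end{equation}
Averaging over $v$ shows each bracket is $\ge0$.

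The main obstacle is precisely this independence check. The conditional law of $T_i$ given $V_i=v$ is arbitrary, which is fine since the less-noisy hypothesis quantifies over all input laws. What must hold is that conditioning on $V_i$ leaves the $i$th channel intact. I expect this to require spelling out that $V_i$ is a function of $T_{\neq i}$ and of noises independent of the $i$th noise, so that $(U,V_i)-T_i-(Z_i\text{ or }J_i)$ is Markov. Once that is verified, the telescoping sum yields $I(U;Z^n)\ge I(U;J^n)$ for every $U$ and every law $P_{UT^n}$, which is the claim.
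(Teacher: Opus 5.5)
Your proof is correct and follows essentially the same route as the paper. The telescoping over hybrids $W_i$ plus the chain rule is exactly the Csisz\'ar sum identity the paper invokes, with $H_i=V_i=(Z^{i-1},J_{i+1}^n)$. You then condition on $V_i=v$ and apply the single-letter less-noisy hypothesis to the conditional input law $P_{UT_i\mid V_i=v}$, as the paper does. The abandoned split into $I(UV_i;Z_i)-I(V_i;Z_i)$ and the announced induction, which you never use, can simply be dropped.
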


\begin{proof}
Take any joint law $P_{UT^n}$.  Couple the two product channels in any
memoryless way and set
$H_i=(Z^{i-1},J_{i+1}^n)$.  Csisz{\'a}r's sum identity gives
\begin{equation}
 I(U;Z^n)-I(U;J^n)
 =\sum_{i=1}^n\!\left[
 I(U;Z_i\mid H_i)-I(U;J_i\mid H_i)
 \right].
 \label{eq:sumidentity}
\end{equation}
For each $H_i=h$ of positive probability, memorylessness leaves the Markov
chain $U-T_i-(Z_i,J_i)$.  Applying the one-copy less-noisy inequality to the
conditional input law and averaging over $h$ makes every bracket in
Eq.~\eqref{eq:sumidentity} nonnegative.
\end{proof}

\begin{proposition}[GGK comparison of secret-key rates]
\label{prop:ggk}
For a fixed honest-party marginal $P_{XY}$,
\begin{equation}
 P_{Z\mid XY}\succeq_{\mathrm{ln}}P_{J\mid XY}
 \quad\Longrightarrow\quad
 S(X;Y\Vert Z)\le S(X;Y\Vert J).
 \label{eq:ggk-comparison}
\end{equation}
\end{proposition}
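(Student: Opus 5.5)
The plan is to show directly that any protocol achieving rate $R$ against Eve observing $Z^n$ also achieves rate $R$ against Eve observing $J^n$. Fix a protocol: $n$ copies of $(X,Y)$, private randomness $(R_A,R_B)$ independent of everything else, transcript $F=(F_1,\dots,F_m)$, and outputs $K_A,K_B$. The correctness and uniformity requirements involve only $P_{XY}$ and the protocol, so they are the same for both eavesdroppers. It therefore suffices to prove
\begin{equation*}
 I(K_A;J^n,F)\le I(K_A;Z^n,F)
\end{equation*}
for every such protocol. Both joint laws can be computed by applying the product channels $P_{Z^n\mid X^nY^n}$ and $P_{J^n\mid X^nY^n}$ to the same law of $(K_A,F,X^n,Y^n)$. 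Given $(X^n,Y^n)$, the eavesdropper's variable is generated with fresh randomness, independent of $(K_A,F,R_A,R_B)$.

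First, expand with the chain rule:
\begin{equation*}
 I(K_A;Z^n,F)=I(K_A;F)+I(K_A;Z^n\mid F).
\end{equation*}
The term $I(K_A;F)$ is identical for $Z$ and $J$. So we need $I(K_A;Z^n\mid F)\ge I(K_A;J^n\mid F)$. Condition on $F=f$. The conditional law of $(K_A,X^n,Y^n)$ given $F=f$ is some law, and $Z^n$ or $J^n$ is still obtained from $(X^n,Y^n)$ by the same memoryless product channel, because the fresh channel noise is independent of $F$. Setting $U=K_A$ and $T^n=(X^n,Y^n)$ under this conditional law, Lemma~\ref{lem:tensor} applies to $P_{Z\mid XY}\succeq_{\rm ln}P_{J\mid XY}$ and gives
\begin{equation*}
 I(K_A;Z^n\mid F=f)\ge I(K_A;J^n\mid F=f).
\end{equation*}
Averaging over $f$ yields the inequality. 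The same argument works with $I(K_A;Z^n,F)$ under either weak or strong normalization, since the comparison holds for every $n$.

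The main point needing care is the Markov structure $(K_A,F)-(X^n,Y^n)-Z^n$ under interactive communication with private randomness. It holds because Eve's variable is produced by a channel acting on $(X^n,Y^n)$ alone. Consequently, conditioning on $F$ changes only the input law and not the channel, which is exactly the freedom the less-noisy definition allows, since it quantifies over all input laws $P_{UT}$. I will verify this by writing the full joint law as $P_{X^nY^n}P_{R_A}P_{R_B}\,\mathbf 1[\text{protocol}]\,P_{Z^n\mid X^nY^n}$.

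Finally, any sequence of protocols secure against $Z$ is secure against $J$ with the same rate, so $S(X;Y\Vert Z)\le S(X;Y\Vert J)$.
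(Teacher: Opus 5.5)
Your proposal is correct and follows essentially the same route as the paper. You keep the honest-party protocol fixed, split off the common term $I(K_A;F)$, note that conditioning on $F=f$ changes only the input law of $(X^n,Y^n)$ and not the memoryless channels to $Z^n$ and $J^n$, and then apply Lemma~\ref{lem:tensor} with $U=K_A$ before averaging over $f$; the paper checks the needed Markov structure through the factorization $P(k_A,f,z^n,j^n\mid x^n,y^n)=P(k_A,f\mid x^n,y^n)\,P(z^n,j^n\mid x^n,y^n)$, which is the verification you announce.
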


\begin{proof}
Consider an arbitrary $n$-copy interactive protocol for the source with Eve
variable $Z^n$.  Run the identical honest-party protocol on the same
$(X^n,Y^n)$ marginal, replacing Eve's side information by $J^n$.  The joint
law of $(K_A,K_B,F)$ is unchanged, so reliability, uniformity, and rate are
unchanged.

It remains to compare leakage.  For every transcript value $f$ of positive
probability, private randomness is independent of the source and
\begin{equation}
 P(k_A,f,z^n,j^n\mid x^n,y^n)
 =P(k_A,f\mid x^n,y^n)\,P(z^n,j^n\mid x^n,y^n).
\end{equation}
Consequently, the protocol and memoryless-source structure imply
\begin{equation}
 K_A-(X^n,Y^n)-(Z^n,J^n)
 \qquad\text{under }P(\,\cdot\mid F=f).
\end{equation}
Conditioning on $F=f$ changes the input law of $(X^n,Y^n)$ but not either
channel conditional on that input.  Lemma~\ref{lem:tensor} therefore gives
\begin{equation}
 I(K_A;J^n\mid F=f)\le I(K_A;Z^n\mid F=f).
\end{equation}
Averaging over $f$ and adding the common term $I(K_A;F)$ yields
\begin{equation}
 I(K_A;J^n,F)\le I(K_A;Z^n,F).
 \label{eq:leakcompare}
\end{equation}
Every protocol secure against $Z^n$ is consequently secure against $J^n$ at
the same rate.  Taking the supremum proves Eq.~\eqref{eq:ggk-comparison}.
\end{proof}

\begin{remark}
In
Eq.~\eqref{eq:ggk-comparison}, $Z\succeq_{\rm ln}J$ means that Eve obtains
\emph{at least} as much mutual information through $Z$ as through $J$ for
every auxiliary variable $U$.  Replacing $Z$ by $J$ therefore
weakens Eve and cannot lower the secret-key rate.
\end{remark}

\section{Exact construction and vanishing secret-key rate}
\label{sec:exact-source}

This appendix constructs the source of Table~\ref{tab:source} and proves the
vanishing-rate half of Theorem~\ref{thm:bound-information}.  The three matrices below are the
slices of that table.

Let $X,Y\in\{0,1\}$ and $Z\in\{0,1,\bot\}$.  Define
\begin{align}
 M_0&=\frac1{36}\begin{pmatrix}5&2\\2&0\end{pmatrix},&
 M_1&=\frac1{36}\begin{pmatrix}0&2\\2&5\end{pmatrix},&
 M_\bot&=\frac1{36}\begin{pmatrix}5&4\\4&5\end{pmatrix}.
 \label{eq:exact-source}
\end{align}
The entries sum to
\begin{equation}
\frac{9+9+18}{36}=1,
\end{equation}
and
\begin{equation}
 P_{XY}=M_0+M_1+M_\bot
 =\frac1{18}\begin{pmatrix}5&4\\4&5\end{pmatrix}.
 \label{eq:exact-pxy}
\end{equation}

Introduce a latent bit $L$ through
\begin{align}
 C_0:=P_{XY,L=0}
 &=\frac1{18}\begin{pmatrix}5&2\\2&0\end{pmatrix},&
 C_1:=P_{XY,L=1}
 &=\frac1{18}\begin{pmatrix}0&2\\2&5\end{pmatrix}.
 \label{eq:exact-latent}
\end{align}
Each has mass $1/2$.  Sending $L$ through a $\BEC(1/2)$ gives
$M_0=C_0/2$, $M_1=C_1/2$, and
$M_\bot=(C_0+C_1)/2$, as required.
The corresponding channel reads
\begin{equation}
 P(L=1\mid x,y)=
 \begin{pmatrix}0&1/2\\[1mm]1/2&1\end{pmatrix}_{x,y}.
 \label{eq:exact-L-channel}
\end{equation}
This channel stays fixed when the less-noisy order is tested under
arbitrary input laws on $(X,Y)$.

Send the same latent bit through an auxiliary $\BSC(1/5)$ and call its
output $J$.  Its slices are
\begin{align}
 A:=P_{XY,J=0}
 &=\frac45C_0+\frac15C_1
 =\frac1{18}\begin{pmatrix}4&2\\2&1\end{pmatrix}
 =\frac1{18}\begin{pmatrix}2\\1\end{pmatrix}
 \begin{pmatrix}2&1\end{pmatrix},
 \label{eq:exact-A}\\
 B:=P_{XY,J=1}
 &=\frac15C_0+\frac45C_1
 =\frac1{18}\begin{pmatrix}1&2\\2&4\end{pmatrix}
 =\frac1{18}\begin{pmatrix}1\\2\end{pmatrix}
 \begin{pmatrix}1&2\end{pmatrix}.
 \label{eq:exact-B}
\end{align}
After normalization, both are product distributions.  Therefore
\begin{equation}
 X\indep Y\mid J,\qquad I(X;Y\mid J)=0.
 \label{eq:exact-separator}
\end{equation}

For any auxiliary $U$ and any input law $P_{U,XY}$, generate $L$ from
$(X,Y)$ according to Eq.~\eqref{eq:exact-L-channel}, then generate $Z,J$.
The outputs depend on $(U,X,Y)$ only through $L$.  Lemmas
\ref{lem:bec} and \ref{lem:bsc} give
\begin{equation}
 I(U;Z)=\frac12I(U;L),\qquad
 I(U;J)\le\frac9{25}I(U;L).
\end{equation}
Hence
\begin{equation}
 P_{Z\mid XY}\succeq_{\rm ln}P_{J\mid XY}.
\end{equation}
Proposition~\ref{prop:ggk} and the intrinsic-information upper bound
\cite{MaurerWolf1999} now imply
\begin{equation}
 0\le S(X;Y\Vert Z)
 \le S(X;Y\Vert J)
 \le I(X;Y{\downarrow}J)
 \le I(X;Y\mid J)=0.
 \label{eq:exact-zero}
\end{equation}
Thus the secret-key rate is zero even with arbitrary
interactive public discussion.

\section{Positive intrinsic information}
\label{sec:intrinsic-proof}

This appendix proves the second half of Theorem~\ref{thm:bound-information}, namely that the
intrinsic information of the source is strictly positive.  In the basis $A,B$ of
Appendix~\ref{sec:exact-source}, the slices read
\begin{equation}
 M_0=\frac23A-\frac16B,\qquad
 M_1=-\frac16A+\frac23B,\qquad
 M_\bot=\frac12A+\frac12B.
 \label{eq:ABbasis}
\end{equation}

\begin{lemma}[Nonnegative rank-one matrices in the span of $A,B$]
\label{lem:rigidity}
For all real $a,b$,
\begin{equation}
 \det(aA+bB)=\frac{ab}{36}.
 \label{eq:detidentity}
\end{equation}
Consequently, if $aA+bB$ is nonzero, entrywise nonnegative, and rank one,
then it is a positive multiple of $A$ or of $B$.
\end{lemma}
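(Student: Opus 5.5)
The argument splits into a direct computation of the determinant followed by a short case analysis on the signs of $a$ and $b$.

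\emph{Determinant identity.} Write out the combination explicitly:
\[
aA+bB=\tfrac1{18}\begin{pmatrix}4a+b&2a+2b\\2a+2b&a+4b\end{pmatrix}.
\]
Its determinant is
\[
\tfrac1{324}\bigl[(4a+b)(a+4b)-4(a+b)^2\bigr].
\]
Expanding, $(4a+b)(a+4b)=4a^2+17ab+4b^2$ and $4(a+b)^2=4a^2+8ab+4b^2$. Their difference is $9ab$, so
\[
\det(aA+bB)=\tfrac{9ab}{324}=\tfrac{ab}{36},
\]
which is Eq.~\eqref{eq:detidentity}. A conceptual cross-check is available: $A=uu^{\mathsf T}/18$ and $B=vv^{\mathsf T}/18$ with $u=(2,1)^{\mathsf T}$ and $v=(1,2)^{\mathsf T}$. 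Then $aA+bB=W\,\mathrm{diag}(a,b)\,W^{\mathsf T}/18$ with $W=(u\ v)$, so the determinant equals $ab\,\det(W)^2/324=9ab/324$.

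\emph{Rank-one consequence.} Suppose $aA+bB$ is nonzero and of rank one. Then its determinant vanishes, so $ab=0$.

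If $b=0$, the matrix equals $aA$ with $a\neq0$. Since $A$ has strictly positive entries, entrywise nonnegativity forces $a>0$, so the matrix is a positive multiple of $A$. The case $a=0$ is symmetric and gives a positive multiple of $B$.

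The only potential subtlety is that nonnegativity must exclude negative multiples. Here it does so immediately, because every entry of $A$ and of $B$ is strictly positive. I do not expect a genuine obstacle at any step.
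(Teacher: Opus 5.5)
Your proposal is correct and follows essentially the same route as the paper: you expand $aA+bB$ entrywise to get the determinant $9ab/18^2=ab/36$, then use vanishing determinant together with the strict positivity of the entries of $A$ and $B$ to force a positive multiple. Your factorization cross-check $aA+bB=W\,\mathrm{diag}(a,b)\,W^{\mathsf T}/18$ is a nice addition but not needed.
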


\begin{proof}
Using Eqs.~\eqref{eq:exact-A} and \eqref{eq:exact-B},
\begin{equation}
 aA+bB=\frac1{18}
 \begin{pmatrix}
 4a+b&2a+2b\\
 2a+2b&a+4b
 \end{pmatrix}.
\end{equation}
Its determinant is
\begin{equation}
 \frac{(4a+b)(a+4b)-4(a+b)^2}{18^2}
 =\frac{9ab}{18^2}=\frac{ab}{36}.
\end{equation}
Rank one forces $ab=0$.  Since both $A$ and $B$ are entrywise
positive and the matrix is nonzero and nonnegative, the remaining
coefficient is positive.
\end{proof}

\begin{proposition}[Conditional independence would imply a simulation of $J$]
\label{prop:separator-simulates-J}
If a channel $Z\to\bar Z$ obeyed
$X\indep Y\mid\bar Z$, then there would exist a channel $Z\to J$ whose
joint slices are exactly $A,B$.
\end{proposition}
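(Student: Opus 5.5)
The plan is to follow the main-text argument, now made rigorous with Lemma~\ref{lem:rigidity}.

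Let the channel $Z\to\bar Z$ have transition probabilities $W(k\mid z)$, with $k$ ranging over a finite alphabet. Each output slice is then
\[
N_k=\sum_z W(k\mid z)M_z .
\]
By Eq.~\eqref{eq:ABbasis}, every $M_z$ lies in the real span of $A$ and $B$. Hence $N_k=a_kA+b_kB$, with coefficients
\[
a_k=\tfrac23W(k|0)-\tfrac16W(k|1)+\tfrac12W(k|\bot)
\]
and an analogous expression for $b_k$.

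Next we use the hypothesis $X\indep Y\mid\bar Z$. Every output $k$ of positive probability has an entrywise nonnegative, nonzero slice $N_k$. Conditional independence given $\bar Z=k$ means $N_k$ has rank one, by the slice criterion of Appendix~\ref{sec:conventions}. Lemma~\ref{lem:rigidity} then says $N_k$ is a positive multiple of $A$ or of $B$. Outputs with $N_k=0$ may be assigned to either class, say the $A$ class.

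We partition the outputs into the set $K_A$ of $A$-type outputs and the set $K_B$ of $B$-type outputs. Composing $W$ with the deterministic merge $k\mapsto 0$ on $K_A$ and $k\mapsto 1$ on $K_B$ gives a channel $V(j\mid z)=\sum_{k\in K_j}W(k\mid z)$ from $Z$ to $\{0,1\}$. Its slices are
\[
\sum_z V(0|z)M_z=\lambda A,\qquad \sum_z V(1|z)M_z=\mu B,
\]
for some $\lambda,\mu\ge0$. Since $V$ is stochastic, these two slices sum to $\sum_z M_z=P_{XY}=A+B$. Linear independence of $A$ and $B$ then forces $\lambda=\mu=1$. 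So the merged output has joint slices exactly $A$ and $B$, which is the claim.

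The only delicate point is making sure the slices of the merged channel are the joint law of its output with $(X,Y)$ under the source. This holds because the channel acts on $Z$ alone, so the Markov chain $(X,Y)-Z-\bar Z$ holds. I do not expect any genuine obstacle; the rigidity step is already carried by Lemma~\ref{lem:rigidity}.
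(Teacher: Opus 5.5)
Your proposal is correct and follows the paper's proof step for step. You write each degraded slice in the span of $A$ and $B$, use the rank-one criterion with Lemma~\ref{lem:rigidity} to sort the outputs into $A$-type and $B$-type, merge them, and then use $\lambda A+\mu B=A+B$ with linear independence to force $\lambda=\mu=1$; your explicit assignment of zero-probability outputs is a harmless detail the paper leaves implicit.
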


\begin{proof}
For every output $k$ of $\bar Z$, its unnormalized slice is
\begin{equation}
 N_k=P_{XY,\bar Z=k}
 =\sum_{z}P(\bar Z=k\mid Z=z)M_z.
\end{equation}
Equation~\eqref{eq:ABbasis} puts $N_k$ in
$\operatorname{span}\{A,B\}$, so $N_k=a_kA+b_kB$.  Conditional independence
means every nonzero $N_k$ is nonnegative and rank one.  By
Lemma~\ref{lem:rigidity}, it is a positive multiple of exactly one of $A$
and $B$.

Group all $A$-type outputs of $\bar Z$ into a symbol $0$ and all
$B$-type outputs into a symbol $1$.  The resulting binary variable $J'$ is a
degradation of $Z$ and has slices $\lambda A,\mu B$ for some
$\lambda,\mu\ge0$.  Their sum is the fixed marginal
$P_{XY}=A+B$.  Since $A$ and $B$ are linearly independent,
$\lambda=\mu=1$.  Thus $J'$ reproduces $J$.
\end{proof}

\begin{proposition}[$J$ is not a degradation of $Z$]
\label{prop:notdegraded}
No stochastic channel $Z\to J$ reproduces the slices $A,B$.
\end{proposition}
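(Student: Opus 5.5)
Suppose, toward a contradiction, that a stochastic channel $Z\to J$ exists with $P_{XY,J=0}=A$ and $P_{XY,J=1}=B$. Since $Z$ takes only the three values $0,1,\bot$, such a channel is described by three numbers
\begin{equation*}
 r_z=P(J{=}1\mid Z{=}z)\in[0,1],\qquad z\in\{0,1,\bot\}.
\end{equation*}
The slice condition then becomes the linear matrix identity
\begin{equation*}
 B=r_0M_0+r_1M_1+r_\bot M_\bot .
\end{equation*}
The condition $A=(1-r_0)M_0+(1-r_1)M_1+(1-r_\bot)M_\bot$ follows automatically, because $M_0+M_1+M_\bot=P_{XY}=A+B$. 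So it suffices to show that the single identity above has no solution with all $r_z\in[0,1]$.

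I will compare only the two diagonal entries $(0,0)$ and $(1,1)$; an overdetermined subsystem is enough. From Eq.~\eqref{eq:exact-source}, the $(0,0)$ entries of $M_0,M_1,M_\bot$ are $5,0,5$ (in units of $1/36$), and the $(0,0)$ entry of $B$ is $1/18=2/36$. This gives $5r_0+5r_\bot=2$, that is, $(r_0+r_\bot)/2=1/5$. Likewise, the $(1,1)$ entries are $0,5,5$, and the entry of $B$ is $4/18=8/36$. This gives $(r_1+r_\bot)/2=4/5$.

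These two equations reproduce Eq.~\eqref{eq:nodegrade}, and they have a transparent meaning. Given $(X,Y)=(0,0)$ the latent bit is deterministically $L=0$, so $Z$ is $0$ or $\bot$ with probability $\tfrac12$ each. Meanwhile $J$ must be a $\BSC(1/5)$ copy of $L$, so it equals $1$ with probability $1/5$.

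Subtracting the first equation from the second eliminates $r_\bot$ and yields $r_1-r_0=6/5>1$. This contradicts $r_0,r_1\in[0,1]$, which forces $r_1-r_0\le1$. Hence no such channel exists.

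There is no serious obstacle here. The only point needing care is that the argument must use the full joint slices with $(X,Y)$, not just the marginal of $J$, since $J$ alone is a uniform bit and is trivially simulable. Restricting to the two deterministic-$L$ cells $(0,0)$ and $(1,1)$ is exactly what isolates the contradiction, so I would present those two entry equations and the subtraction as the whole proof.
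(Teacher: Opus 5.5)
Your proposal is correct and follows essentially the same route as the paper: both reduce the channel to the three numbers $r_z=P(J{=}1\mid Z{=}z)$, impose the constraints at the two cells $(X,Y)=(0,0)$ and $(1,1)$ where $L$ is deterministic, and subtract to get $r_1-r_0=6/5$. The only difference is cosmetic. You read off unnormalized slice entries, whereas the paper phrases the same two equations as conditional probabilities given $(X,Y)$.
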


\begin{proof}
Assume such a channel and define
\begin{equation}
 r_0=P(J=1\mid Z=0),\quad
 r_1=P(J=1\mid Z=1),\quad
 r_\bot=P(J=1\mid Z=\bot).
\end{equation}
Conditioned on $XY=00$, the latent bit is $L=0$, so
$P(Z=0\mid00)=P(Z=\bot\mid00)=1/2$.  The auxiliary BSC requires
$P(J=1\mid00)=1/5$, hence
\begin{equation}
 \frac12r_0+\frac12r_\bot=\frac15.
 \label{eq:r00}
\end{equation}
Conditioned on $XY=11$, one has $L=1$,
$P(Z=1\mid11)=P(Z=\bot\mid11)=1/2$, and
$P(J=1\mid11)=4/5$.  Thus
\begin{equation}
 \frac12r_1+\frac12r_\bot=\frac45.
 \label{eq:r11}
\end{equation}
Subtracting Eqs.~\eqref{eq:r00} and \eqref{eq:r11} gives
$r_1-r_0=6/5$, impossible for $r_0,r_1\in[0,1]$.
\end{proof}

By the finite zero criterion proved next, zero intrinsic
information would imply a channel making $X$ and $Y$ conditionally
independent.  Propositions
\ref{prop:separator-simulates-J} and \ref{prop:notdegraded} rule this out, so
\begin{equation}
 I(X;Y{\downarrow}Z)>0.
 \label{eq:strictintrinsic}
\end{equation}
Renner and Wolf proved that information of formation is lower bounded by
intrinsic information \cite{RennerWolf2003}.  Combining
Eqs.~\eqref{eq:exact-zero} and \eqref{eq:strictintrinsic} gives
\begin{equation}
 S(X;Y\Vert Z)=0
 <I(X;Y{\downarrow}Z)
 \le I_{\rm form}(X;Y\mid Z).
\end{equation}

\section{Zero intrinsic information for finite alphabets}
\label{sec:finite-zero}

For finite alphabets, zero intrinsic information implies
that some channel makes $X$ and $Y$ conditionally independent.  We include
a proof of this consequence of the finite-attainment result of
Christandl, Renner, and Wolf~\cite{ChristandlRennerWolf2003}.

\begin{proposition}[Zero intrinsic information and conditional independence]
\label{prop:finitezero}
If $X,Y,Z$ are finite and $I(X;Y{\downarrow}Z)=0$, then there is a finite
channel $Z\to W$ such that $X\indep Y\mid W$.  The converse is immediate.
\end{proposition}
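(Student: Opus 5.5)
The plan is a compactness argument: restrict Eve's processing to a fixed finite output alphabet, so the infimum defining $I(X;Y{\downarrow}Z)$ runs over a compact set of channels, and then invoke continuity.

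\emph{Step 1: reduce the output alphabet.} Any channel $Z\to\bar Z$ is determined, for the quantity $I(X;Y\mid\bar Z)$, by its collection of output slices $N_k=P_{XY,\bar Z=k}$. Each $N_k$ is a nonnegative combination $\sum_z P(\bar Z=k\mid Z=z)M_z$. Hence $I(X;Y\mid\bar Z)=\sum_k \|N_k\|_1\, I_{N_k}$, where $I_{N}$ denotes the mutual information of the normalized slice. This is a sum over outputs of a function $\phi(w)$ of the conditional vector $w=(P(\bar Z=k\mid Z=z))_z\in[0,1]^{|\mathcal Z|}$. The function $\phi$ is positively homogeneous of degree one, continuous, and vanishes at $w=0$.

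\emph{Step 2: use Carathéodory to bound the alphabet.} The channel is a decomposition of the all-ones vector $\mathbf 1\in\mathbb R^{|\mathcal Z|}$ into a finite sum $\sum_k w_k$ of nonnegative vectors. By homogeneity, write $w_k=t_k v_k$ with $v_k$ in the compact set $\Delta$ of nonnegative vectors with $\max_z v_z=1$. Minimizing $\sum_k t_k\phi(v_k)$ subject to $\sum_k t_k v_k=\mathbf 1$ is then a linear program over measures on $\Delta$.

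A Carathéodory/Fenchel–Eggleston argument applies to the $|\mathcal Z|+1$ linear functionals (the constraint coordinates plus the objective). It shows that any value attained by some channel is also attained, or improved, using at most $|\mathcal Z|+1$ outputs. Thus the infimum over all channels equals the infimum over channels with output alphabet of size $m=|\mathcal Z|+1$ (or some fixed $m$).

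\emph{Step 3: take limits.} The set of stochastic matrices $Z\to\{1,\dots,m\}$ is compact, and $I(X;Y\mid\bar Z)$ is continuous in the channel entries. Continuity holds because each term $\|N_k\|_1 I_{N_k}=H$-combinations of unnormalized entries is continuous, including where $N_k\to0$, since $x\log x\to0$. So $I(X;Y{\downarrow}Z)=0$ yields a sequence of channels with $I\to0$, and a convergent subsequence has a limit channel $W$ with $I(X;Y\mid W)=0$, i.e.\ $X\indep Y\mid W$.

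The converse is immediate from the definition, since $0\le I(X;Y{\downarrow}Z)\le I(X;Y\mid W)=0$.

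\emph{Main obstacle.} The delicate point is the alphabet reduction in Step 2: the infimum could a priori require ever larger alphabets. I would first try the direct convexity argument, which uses no attainment claim. For fixed $m$, the constraint cone is generated by the $\phi$-weighted points $(v,\phi(v))$. Any feasible decomposition of $(\mathbf 1,\text{value})$ lies in the conic hull of these points. By Carathéodory's theorem in dimension $|\mathcal Z|+1$, it can be rewritten with at most $|\mathcal Z|+1$ generators while keeping the same objective. This only needs the point to be in the cone, not any attainment, so it fits cleanly before the compactness step.
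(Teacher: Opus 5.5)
Your proof is correct, but it is organized differently from the paper's.

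The paper parametrizes a degradation by its posteriors $q_w=P_{Z\mid W=w}$ in the simplex $\Delta_m$, with weights $\lambda_w$. It takes a weak limit of the induced probability measures $\mu_n$ on $\Delta_m$ and concludes that the limit measure has barycenter $P_Z$ and is supported on the compact zero set $\mathcal Q_0=\{\Phi=0\}$. Only then does it apply affine Carath\'eodory, writing $P_Z$ as a convex combination of at most $|\mathcal Z|$ points of $\mathcal Q_0$.

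You instead parametrize by the channel rows $w_k=P(\bar Z=k\mid Z=\cdot)$, which sum to $\mathbf 1$. You use positive homogeneity of $\phi$ to apply conic Carath\'eodory in $\mathbb R^{|\mathcal Z|+1}$ \emph{before} taking any limit. This reduces every channel, at the same value, to one with a fixed output alphabet of size $|\mathcal Z|+1$. Compactness of the finite-dimensional set of stochastic matrices and continuity of $\sum_k\phi(w_k)$ then finish the argument; you correctly handle continuity at $w_k=0$ via $x\log x\to0$.

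Your route uses only finite-dimensional compactness rather than weak compactness of measures. It also proves more: the infimum is attained for any value, not just zero, which is the full finite-attainment statement of Christandl, Renner, and Wolf.

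The paper's route applies Carath\'eodory only on the zero set, where the objective coordinate vanishes identically. This gives the slightly smaller bound of $|\mathcal Z|$ outputs. You can recover that bound too: in any decomposition of $(\mathbf 1,0)$ every nonzero generator has $\phi=0$, so the last coordinate can be dropped.

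One minor point: the normalization to $\{\max_z v_z=1\}$ and the detour through a linear program over measures are unnecessary. Homogeneity alone lets you rescale the Carath\'eodory coefficients $s_j$ into valid channel rows $s_jv_j$ summing to $\mathbf 1$, with objective $\sum_j s_j\phi(v_j)$ unchanged.
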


\begin{proof}
Discard zero-probability symbols and write $m=|\mathcal Z|$.  For a
probability vector $q$ on $\mathcal Z$, define
\begin{equation}
 P^{(q)}_{XY}:=\sum_z q(z)P_{XY\mid Z=z},
 \qquad
 \Phi(q):=I\!\left(P^{(q)}_{XY}\right).
\end{equation}
The function $\Phi$, the mutual information between $X$ and $Y$ under
the mixed law $P^{(q)}_{XY}$, is continuous and nonnegative on the
compact simplex $\Delta_m$.

Any degradation $Z\to W$ defines conditional distributions
$q_w=P_{Z\mid W=w}$ and weights $\lambda_w=P_W(w)$ satisfying
\begin{equation}
 \sum_w\lambda_wq_w=P_Z,\qquad
 I(X;Y\mid W)=\sum_w\lambda_w\Phi(q_w).
 \label{eq:posteriorensemble}
\end{equation}
Conversely, any finite weighted collection of distributions
satisfying $\sum_w\lambda_wq_w=P_Z$ defines a channel through
\begin{equation}
 P(W=w\mid Z=z)=\frac{\lambda_wq_w(z)}{P_Z(z)}.
 \label{eq:ensemblechannel}
\end{equation}

If the infimum is zero, choose channels whose conditional
mutual information tends to zero.  Let $\mu_n$ be the probability measure
on $\Delta_m$ assigning weight $\lambda_w$ to each conditional distribution
$q_w$ for the $n$th channel.  Each $\mu_n$ has average $P_Z$, and the average
of $\Phi$ tends to zero.  Probability measures on the compact simplex are
weakly sequentially compact.  Since the identity map and $\Phi$ are continuous on
the simplex, a weakly convergent subsequence has a limit $\mu$ satisfying
\begin{equation}
 \int q\,d\mu(q)=P_Z,\qquad
 \int\Phi(q)\,d\mu(q)=0.
\end{equation}
Since $\Phi\ge0$ and its integral is zero, $\mu$ is supported
on the compact set
$\mathcal Q_0=\{q:\Phi(q)=0\}$.  Therefore $P_Z$ lies in
$\operatorname{conv}(\mathcal Q_0)$.  Since $\Delta_m$ has affine dimension
$m-1$, Carath{\'e}odory's theorem expresses $P_Z$ as a convex combination of
at most $m$ points in $\mathcal Q_0$.  Equation~\eqref{eq:ensemblechannel}
then gives a finite $W$ whose conditional $XY$ laws all have mutual
information zero, i.e., are product laws.
\end{proof}

\section{A two-parameter family of bound-information sources}
\label{sec:family}

This appendix shows that the source of Table~\ref{tab:source} is not
isolated.  It belongs to a two-parameter family with the same
properties.  Fix $0<t<1/2$, and set
\begin{equation}
 d=t^2+(1-t)^2,\qquad
 \delta=\frac{t^2}{d}.
\end{equation}
Choose
\begin{equation}
 2\delta<\epsilon\le4\delta(1-\delta).
 \label{eq:familywindow}
\end{equation}
The interval is nonempty because $\delta<1/2$.  Define latent-bit slices
\begin{align}
 C_0&=\begin{pmatrix}
 d/2&t(1-t)/2\\
 t(1-t)/2&0
 \end{pmatrix},&
 C_1&=\begin{pmatrix}
 0&t(1-t)/2\\
 t(1-t)/2&d/2
 \end{pmatrix}.
 \label{eq:familylatent}
\end{align}
Each has mass $1/2$.  Let $Z=\BEC_\epsilon(L)$ and
$J=\BSC_\delta(L)$.  With
\begin{equation}
 u=\begin{pmatrix}1-t\\t\end{pmatrix},\qquad
 v=\begin{pmatrix}t\\1-t\end{pmatrix},
\end{equation}
the $J$ slices are
\begin{equation}
 A=\frac12uu^{\mathsf T},\qquad
 B=\frac12vv^{\mathsf T}.
\end{equation}
Thus $X\indep Y\mid J$.  The upper inequality in
Eq.~\eqref{eq:familywindow} is equivalent to
\begin{equation}
 1-\epsilon\ge(1-2\delta)^2,
\end{equation}
so Lemmas~\ref{lem:bec} and \ref{lem:bsc} give
$Z\succeq_{\rm ln}J$, and Proposition~\ref{prop:ggk} implies a vanishing
secret-key rate.

The corresponding determinant identity is
\begin{equation}
 \det(aA+bB)=\frac{(1-2t)^2}{4}ab.
 \label{eq:familydet}
\end{equation}
Again, every nonzero, nonnegative rank-one matrix in this
span is a positive multiple of $A$ or $B$.  Finally, if
$r_z=P(J=1\mid Z=z)$, the conditions at $XY=00$ and $XY=11$ imply
\begin{equation}
 r_1-r_0=\frac{1-2\delta}{1-\epsilon}.
\end{equation}
The lower inequality $\epsilon>2\delta$ makes the right-hand side larger
than one, excluding $Z\to J$.  The proof of
Appendices~\ref{sec:intrinsic-proof} and \ref{sec:finite-zero} applies verbatim.
Consequently every parameter pair in Eq.~\eqref{eq:familywindow} has
bipartite bound information.  The main-text source is
$t=1/3$, $\delta=1/5$, and $\epsilon=1/2$.

\section{The Renner--Wolf candidate}
\label{sec:renwol-candidate}

This appendix analyzes the Renner--Wolf candidate discussed in
Appendix~\ref{sec:historical-candidates}.  Renner and Wolf proposed it in the concluding
remarks of Ref.~\cite{RennerWolf2003}.  We present it in the same slice
notation as our new source.  Let $X,Y\in\{0,1,2,3\}$, let
$c_a=(1+8a)^{-1}$, and let $E_{xy}$ denote the matrix unit at row $x$,
column $y$.  The two shared Eve symbols have slices
\begin{equation}
 M_0=c_a
 \begin{pmatrix}
  1/8&0&0&0\\
  0&1/8&0&0\\
  0&0&1/4&0\\
  0&0&0&0
 \end{pmatrix},\qquad
 M_1=c_a
 \begin{pmatrix}
  0&1/8&0&0\\
  1/8&0&0&0\\
  0&0&0&0\\
  0&0&0&1/4
 \end{pmatrix}.
 \label{eq:rw-source}
\end{equation}
For every cross-block cell
\begin{equation}
 (x,y)\in
 \bigl(\{0,1\}\times\{2,3\}\bigr)
 \cup
 \bigl(\{2,3\}\times\{0,1\}\bigr),
\end{equation}
Eve receives the unique symbol $(x,y)$, with slice
\begin{equation}
 M_{(x,y)}=c_a aE_{xy}.
 \label{eq:rw-cross-slices}
\end{equation}
Equivalently, the honest-party marginal is
\begin{equation}
 P_{XY}=c_a
 \begin{pmatrix}
  1/8&1/8&a&a\\
  1/8&1/8&a&a\\
  a&a&1/4&0\\
  a&a&0&1/4
 \end{pmatrix}.
 \label{eq:rw-marginal}
\end{equation}
Ref.~\cite{RennerWolf2003} conjectured a vanishing secret-key rate for
$a\ge1/(4\sqrt2)=\sqrt2/8$, motivated by the bound entanglement of the
corresponding quantum translation (F.~Spedalieri, personal communication
cited there).

We first show that the source has positive intrinsic
information for every $a\ge0$.
Suppose a degradation $Z\to\bar Z$ made every nonzero slice
$N_\lambda=P_{XY,\bar Z=\lambda}$ rank one, and put
$q_{\lambda|z}=P(\lambda\mid Z=z)$.  If
$q_{\lambda|0}>0$, then
$N_\lambda(0,0)$, $N_\lambda(1,1)$, and $N_\lambda(2,2)$ are all positive.
Rank one applied to the $\{0,1\}\times\{0,1\}$ submatrix forces
\begin{equation}
 N_\lambda(0,1)N_\lambda(1,0)
 =N_\lambda(0,0)N_\lambda(1,1)>0.
\end{equation}
Only the symbol $Z=1$ supplies the cells $(0,1)$ and $(1,0)$, so
$q_{\lambda|1}>0$ and hence $N_\lambda(3,3)>0$.  Rank one on rows and
columns $\{2,3\}$ would then require
\begin{equation}
 N_\lambda(2,3)N_\lambda(3,2)
 =N_\lambda(2,2)N_\lambda(3,3)>0,
\end{equation}
which is impossible because the cells $(2,3)$ and $(3,2)$ vanish in every
source slice.  Therefore every rank-one degraded slice has
$q_{\lambda|0}=0$, contradicting
$\sum_\lambda q_{\lambda|0}=1$.  Proposition~\ref{prop:finitezero} gives
\begin{equation}
 I(X;Y{\downarrow}Z)>0\qquad\text{for every }a\ge0.
 \label{eq:rw-positive-intrinsic}
\end{equation}

The following observation shows why the less-noisy argument used for our
new source cannot establish a vanishing rate for this candidate.

\begin{lemma}[Deterministic less-noisy channels]
\label{lem:deterministic-less-noisy}
Let $Z=f(T)$ be deterministic.  If
$P_{Z\mid T}\succeq_{\rm ln}P_{J\mid T}$, that is, if
$I(U;Z)\ge I(U;J)$ for every input law $P_{UT}$,
then $J$ is a stochastic degradation of $Z$.
\end{lemma}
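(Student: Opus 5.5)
The plan is to exploit the fact that for deterministic $Z=f(T)$, choosing the auxiliary variable $U$ to be a suitable function of $T$ makes $I(U;Z)$ vanish. The less-noisy inequality then forces $I(U;J)=0$ as well. Concretely, partition the input alphabet into the fibres $T_z:=f^{-1}(z)$.

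The first step is to show that within each fibre the output law of $J$ does not depend on $t$. Fix $z$ and two inputs $t,t'\in T_z$. Take $U$ uniform on $\{0,1\}$ and set $T=t$ if $U=0$, $T=t'$ if $U=1$. Then $Z=f(T)=z$ is constant, so $I(U;Z)=0$. The hypothesis gives $I(U;J)\le0$, so $U\indep J$, and hence $P_{J\mid T=t}=P_{J\mid T=t'}$. Equivalently, for every input law supported on a single fibre, the channel output $J$ has the same distribution; this is all the argument needs.

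The second step defines the degrading channel $Q(j\mid z):=P_{J\mid T}(j\mid t)$ for any $t\in T_z$; it is well defined by the first step. For symbols $z$ outside the range of $f$, choose $Q(\cdot\mid z)$ arbitrarily. It then follows that $P_{J\mid T}(j\mid t)=\sum_z \mathbf 1[f(t)=z]\,Q(j\mid z)$ for every $t$. Hence $J$ is obtained by applying $Q$ to $Z$, which means $J$ is a stochastic degradation of $Z$ at the level of channels, and therefore for every input law.

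There is no serious obstacle. The only point needing care is that the less-noisy order quantifies over all input laws $P_{UT}$, including ones concentrated on two inputs; this is what licenses the binary-$U$ test. It also requires no coupling of $Z$ and $J$, since only the marginal channels enter, so the construction is legitimate under the Definition.

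In the application, the consequence is that for a Renner--Wolf-type source where Eve's variable is (a deterministic function of) the data, less-noisy dominance of a decoupling $J$ would imply simulation of $J$. That would contradict the positive intrinsic information in Eq.~\eqref{eq:rw-positive-intrinsic}.
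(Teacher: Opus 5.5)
Your proof is correct and is essentially the paper's argument: a two-point input law on a single fibre of $f$ makes $I(U;Z)=0$, so dominance forces $P_{J\mid T=t}=P_{J\mid T=t'}$, and this common law on each fibre defines the channel $Z\to J$. Your binary $U$ that selects $t$ or $t'$ is equivalent to the paper's choice $U=T$ with equal weight on $t$ and $t'$.
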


\begin{proof}
Take any $t,t'$ with $f(t)=f(t')$, put equal probability on them,
and set $U=T$.  Then $I(U;Z)=0$, so less-noisy dominance forces
$I(U;J)=0$.  Hence
$P_{J\mid T=t}=P_{J\mid T=t'}$.  Thus the conditional
distribution of $J$ depends on $t$ only through $f(t)$, which defines a channel $Z\to J$.
\end{proof}

The variable $Z$ in Eqs.~\eqref{eq:rw-source}--\eqref{eq:rw-cross-slices}
is deterministic given $(X,Y)$.
Consequently, if a variable $J$ both made $X$ and $Y$ conditionally
independent and lay below $Z$ in the less-noisy order, the lemma would make
$J$ an allowed degradation of $Z$.  This would imply
$I(X;Y{\downarrow}Z)=0$, contradicting the positive intrinsic information
of the candidate.  Thus a zero-rate proof for this source must use a
different method.

For completeness, the achievability criterion settles the entire range
below the proposed transition.  Choose the singleton matching cells
$(x_1,y_1)=(0,0)$ and $(x_2,y_2)=(2,2)$.  Eve sees the same symbol in both
cases, and the one-copy GGK inequality reduces, after cancelling the common
normalization, to
\begin{equation}
 \frac18\frac14>a^2.
\end{equation}
Therefore
\begin{equation}
 S(X;Y\Vert Z)>0
 \qquad\text{for}\qquad
 0\le a<\frac1{4\sqrt2}=\frac{\sqrt2}{8}.
\end{equation}
Failure of this particular choice of sets at and beyond the threshold
does not imply a zero rate.  The region $a\ge1/(4\sqrt2)$ remains unresolved.

\section{The Horodecki candidate family}
\label{sec:quantum-candidate}

This appendix proves Theorem~\ref{thm:gw}.  The object of study is the
family of distributions in Table~\ref{tab:gw}.  Gisin and
Wolf introduced it as Example~3 of Ref.~\cite{GisinWolf2000}, by
measuring the qutrit states $\sigma_\alpha$ of
Ref.~\cite{HorodeckiActivated1999} shown in Fig.~\ref{fig:alpha}.  We
first fix notation and state the source.  We then determine the exact
transition of the intrinsic information at $\alpha=3$.  This settles the
zero-cost half of the theorem.  The remaining subsections settle the
distillation half.  We state the positive-rate criterion of GGK, verify
it on two copies at $\alpha=4$, reduce it exactly to sums over cosets,
and prove that parity-code filters satisfy it for every fixed
$3<\alpha\le5$.

\subsection{The source and Eve's observation}

Relabel the three honest symbols by $\Fthree=\{0,1,2\}$.  For
$2\le\alpha\le5$, put $\beta=5-\alpha$.  Using the same notation as for
the new source, the Eve slices are
\begin{equation}
 M_\bot=\frac{2}{21}I_3,\qquad
 M_{(x,y)}=
 \begin{cases}
  \dfrac{\alpha}{21}E_{xy},&y-x=1,\\[1mm]
  \dfrac{\beta}{21}E_{xy},&y-x=-1,
 \end{cases}
 \quad x\ne y ,
 \label{eq:gwsource}
\end{equation}
where arithmetic is modulo three and $E_{xy}$ is the matrix unit.  Thus, on $x=y$, Eve receives a common symbol.  On $x\ne y$, her symbol
identifies the ordered pair $(x,y)$.  Equivalently,
\begin{equation}
 P_{XY}(x,y)=\frac{w(y-x)}{21},\qquad
 w(0)=2,\quad w(1)=\alpha,\quad w(2)=\beta .
 \label{eq:gw-weight}
\end{equation}
This is Example~3 of Gisin and Wolf \cite{GisinWolf2000} (see also
Refs.~\cite{GisinRennerWolf2002,RennerDiploma2000}), shown as
Table~\ref{tab:gw} and obtained by a particular
measurement of the qutrit-state family introduced in
Ref.~\cite{HorodeckiActivated1999}.  The state is bound entangled for
$3<\alpha\le4$.  The classical protocol exhibited in
Ref.~\cite{GisinWolf2000} succeeds for $\alpha>4$.  With the symmetrizing channels
chosen there, $\operatorname{Prob}[\bar X=\bar Y]-\tfrac12$ is
proportional to $(\alpha-1)(\alpha-4)$.  Conditioned on agreement, the
bit is uniform given Eve's entire view.

\subsection{Exact intrinsic-information transition}

Both sides of this transition were identified in Ref.~\cite{GisinWolf2000}, which
gave the degradation below and stated positivity above $\alpha=3$.  A rank-one proof of the positive half also appeared in
Renner's diploma thesis \cite{RennerDiploma2000}.  We include
the proof for completeness.

For $2\le\alpha\le3$, both $\alpha$ and $\beta$ are at least $2$.
Degrade Eve's observation as follows.  Send the common diagonal symbol
entirely to a new output $\star$.  From each off-diagonal symbol of weight $\alpha$, send the fraction
$2/\alpha$ to $\star$, and from each symbol of weight $\beta$ the
fraction $2/\beta$.  Keep every
remainder as a separate output.  Omitting the common normalization $1/21$,
the $\star$ slice is
\begin{equation}
 2\begin{pmatrix}1&1&1\\1&1&1\\1&1&1\end{pmatrix},
 \label{eq:gw-star}
\end{equation}
which has rank one, while every remaining slice is supported on a single
cell.  Hence $X\indep Y$ conditioned on the degraded observation
$\bar Z$.  This also gives a zero-cost formation protocol in the sense of
Ref.~\cite{RennerWolf2003}.  Sample $\bar Z$ publicly, then let Alice and Bob
sample independently from $P_{X\mid\bar Z}$ and
$P_{Y\mid\bar Z}$.  The public transcript is simulable from the target
$Z$ through the displayed channel $Z\to\bar Z$.  No secret bits are used.
Consequently
\begin{equation}
 I_{\rm form}(X;Y\mid Z)
 =I(X;Y{\downarrow}Z)
 =S(X;Y\Vert Z)=0,
 \qquad 2\le\alpha\le3.
 \label{eq:gw-zero-range}
\end{equation}

Conversely, suppose $3<\alpha\le5$, so $0\le\beta<2$, and assume that a
degradation $Z\to\bar Z$ makes $X$ and $Y$ conditionally independent.  Fix
an output $\lambda$ of the degradation.  Write
\begin{equation}
 q_\lambda=P(\lambda\mid Z=\bot)
\end{equation}
for the fraction of the common diagonal symbol sent to $\lambda$, and let
$t_0^\lambda,t_1^\lambda,t_2^\lambda$ be the corresponding fractions of
the three distinct off-diagonal symbols on the $\beta$-weighted directed
cycle.  The resulting slice has all three diagonal entries equal to
$2q_\lambda$ and cycle entries
$\beta t_0^\lambda,\beta t_1^\lambda,\beta t_2^\lambda$.
Every rank-one $3\times3$ matrix satisfies the directed-cycle identity, so
\begin{equation}
 \beta^3t_0^\lambda t_1^\lambda t_2^\lambda
 =(2q_\lambda)^3.
 \label{eq:gw-cycle-identity}
\end{equation}
If $\beta=0$, this identity forces $q_\lambda=0$ for every $\lambda$,
contrary to $\sum_\lambda q_\lambda=1$.  We may therefore assume
$0<\beta<2$.  The arithmetic--geometric mean inequality then gives
\begin{equation}
 t_0^\lambda+t_1^\lambda+t_2^\lambda
 \ge\frac{6q_\lambda}{\beta}.
\end{equation}
Summing over $\lambda$ and using
$\sum_\lambda q_\lambda=1$ and
$\sum_\lambda t_i^\lambda=1$ yields
\begin{equation}
 3\ge\frac6\beta,
\end{equation}
contrary to $\beta<2$.  Proposition~\ref{prop:finitezero} therefore implies
\begin{equation}
 I(X;Y{\downarrow}Z)>0,\qquad 3<\alpha\le5.
 \label{eq:gw-positive-intrinsic}
\end{equation}

\subsection{The positive-rate criterion}

We use the following sufficient condition from GGK
Theorem~4.  If, for some
blocklength $n$, there are disjoint nonempty
${\cal A}_1,{\cal A}_2\subseteq{\cal X}^n$ and disjoint nonempty
${\cal B}_1,{\cal B}_2\subseteq{\cal Y}^n$ such that
\begin{equation}
 \sum_{z^n}
 \sqrt{P(z^n,{\cal A}_1,{\cal B}_1)\,P(z^n,{\cal A}_2,{\cal B}_2)}
 >\sqrt{P({\cal A}_1,{\cal B}_2)\,P({\cal A}_2,{\cal B}_1)},
 \label{eq:ggkpositive}
\end{equation}
then $S(X;Y\Vert Z)>0$
\cite{GohariGunluKramer2020}.  It therefore suffices to find a
finite blocklength and four sets satisfying this inequality.  GGK in fact prove an $n$-letter
necessary-and-sufficient characterization, but only the displayed achievability implication is needed here.

For normalized probability distributions $P$ and $Q$, the
Bhattacharyya coefficient $B(P,Q)=\sum_z\sqrt{P(z)Q(z)}$ measures their
statistical overlap.  The left-hand side of Eq.~\eqref{eq:ggkpositive} applies the
same expression to the unnormalized measures associated with the two
matching events.

\subsection{The two-copy coset filter}
\label{sec:two-copy}

In $\Fthree^2$, define
\begin{equation}
 {\cal L}=\{00,12,21\},\qquad e=(0,1),
\end{equation}
and choose
\begin{equation}
 {\cal A}_1={\cal L},\quad
 {\cal A}_2={\cal B}_1={\cal L}+e,\quad
 {\cal B}_2={\cal L}+2e.
 \label{eq:twocopysets}
\end{equation}
The honest parties output binary labels on the two selected cosets and
reject all other blocks.

For a pair of words $x,y$, let $d=y-x$ coordinatewise.  Its unnormalized
weight is the product of the corresponding $w(d_i)$.  All terms in
Eq.~\eqref{eq:ggkpositive} carry the same factor $21^{-2}$, which we
omit.
For each pair of cosets, each possible difference occurs exactly
three times among the pairs of words.

For the first error event
${\cal A}_1\times{\cal B}_2$, the differences are
\begin{equation}
 {\cal L}+2e=\{02,11,20\},
\end{equation}
with weights $2\beta,\alpha^2,2\beta$.  Hence
\begin{equation}
 \widetilde P({\cal A}_1,{\cal B}_2)
 =3(\alpha^2+4\beta).
 \label{eq:cross12}
\end{equation}
For ${\cal A}_2\times{\cal B}_1$, the differences form
${\cal L}=\{00,12,21\}$, with weights
$4,\alpha\beta,\alpha\beta$, so
\begin{equation}
 \widetilde P({\cal A}_2,{\cal B}_1)
 =3(4+2\alpha\beta).
 \label{eq:cross21}
\end{equation}

Both matching events have differences
\begin{equation}
 {\cal L}+e=\{01,10,22\}.
\end{equation}
For differences $01$ and $10$, one coordinate is diagonal and hidden from
Eve while the other has weight $\alpha$.  The two matching
events have six common Eve outcome sequences, each with unnormalized probability
$2\alpha$ on both sides.  Their total Bhattacharyya overlap is
\begin{equation}
 \sum_{z^2}\sqrt{\widetilde P(z^2,{\cal A}_1,{\cal B}_1)
                       \widetilde P(z^2,{\cal A}_2,{\cal B}_2)}
 =12\alpha.
 \label{eq:matchingoverlap}
\end{equation}
For difference $22$, both coordinates are off diagonal, so Eve learns the
full ordered word pair.  Eve's possible outcome sequences
for the two matching events are disjoint, so their overlap is zero.

Equations~\eqref{eq:cross12}--\eqref{eq:matchingoverlap} reduce the strict
GGK inequality to
\begin{equation}
 8\alpha^2>
 (\alpha^2+4\beta)(2+\alpha\beta),
 \qquad \beta=5-\alpha.
 \label{eq:twocopycondition}
\end{equation}
Equivalently,
\begin{equation}
 q(\alpha):=
 \alpha^4-9\alpha^3+46\alpha^2-92\alpha-40>0.
\end{equation}
The quadratic $q''(\alpha)=12\alpha^2-54\alpha+92$
has negative discriminant, so $q''>0$ for all $\alpha$.  Together
with $q'(3)=49>0$, this shows that $q$ is strictly increasing
on $[3,5]$.  Since
$q(3)=-64$ and $q(4)=8$, it has a unique root
\begin{equation}
 \alpha_2=3.917737700\ldots
\end{equation}
on $[3,5]$, so $q>0$ on $(\alpha_2,5]$, and the two-copy filter proves a positive secret-key rate
for $\alpha_2<\alpha\le5$.  At $\alpha=4$,
\begin{equation}
 12\alpha=48,\qquad
 \sqrt{3(16+4)\,3(4+8)}
 =\sqrt{60\cdot36}=12\sqrt{15}<48.
\end{equation}
Thus the distribution at $\alpha=4$ is not bound information, despite the
bound entanglement of the quantum state from which it was obtained.  This
two-copy filter also refutes the binarized-intrinsic-information
multiplicativity conjecture proposed for this family, with this
particular measurement, in
Ref.~\cite{RennerDiploma2000}.

\subsection{The positive-rate criterion for coset filters}

We now generalize the two-copy construction.  For
$d=(d_1,\ldots,d_n)\in\Fthree^n$, define its unnormalized weight
\begin{equation}
 W(d)=\prod_{i=1}^n w(d_i).
 \label{eq:gw-word-weight}
\end{equation}
Let $L\le\Fthree^n$ be a linear subspace, let $e\notin L$, put
$C=L+e$ (the translate of $L$ by $e$), and choose
\begin{equation}
 {\cal A}_1=L,\qquad
 {\cal A}_2={\cal B}_1=C,\qquad
 {\cal B}_2=L+2e=-C.
 \label{eq:general-cosets}
\end{equation}
Define
\begin{equation}
 G(C)=\{d\in C:\ \exists h\in C
 \text{ such that }d_i h_i=0\text{ for every }i\}.
 \label{eq:good-coset}
\end{equation}
As shown below, $G(C)$ consists of the differences that can
contribute to the overlap between the two matching events.

\begin{lemma}[GGK criterion for coset filters]
\label{lem:coset-ggk}
For the four sets in Eq.~\eqref{eq:general-cosets}, the strict GGK
inequality is equivalent to
\begin{equation}
 \sum_{d\in G(C)}W(d)>
 \sqrt{\left(\sum_{d\in-C}W(d)\right)
       \left(\sum_{d\in L}W(d)\right)}.
 \label{eq:coset-ggk}
\end{equation}
\end{lemma}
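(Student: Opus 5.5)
The plan is to compute all three quantities in the GGK inequality, Eq.~\eqref{eq:ggkpositive}, exactly for the coset sets of Eq.~\eqref{eq:general-cosets}. I expect each to reduce to a common factor $|L|\,21^{-n}$ times a sum of word weights $W(d)$, after which the factor cancels. Throughout, I use that a pair of words $(x,y)$ has probability $W(y-x)/21^n$ by Eq.~\eqref{eq:gw-word-weight}. I also use the structure of Eve's symbol: coordinatewise it is $\bot$ where $d_i=y_i-x_i=0$, and it reveals the ordered pair $(x_i,y_i)$ where $d_i\neq0$.

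\emph{Cross terms.} For ${\cal A}_1\times{\cal B}_2=L\times(L+2e)$, the difference $y-x$ ranges over $L+2e=-C$. For each $d\in-C$ there are exactly $|L|$ pairs, one for each choice of $x\in L$. Hence
\[
P({\cal A}_1,{\cal B}_2)=|L|\,21^{-n}\sum_{d\in-C}W(d).
\]
For ${\cal A}_2\times{\cal B}_1=C\times C$, the differences range over $L$, again each $|L|$ times, which gives $|L|\,21^{-n}\sum_{d\in L}W(d)$. The right-hand side of Eq.~\eqref{eq:ggkpositive} is therefore $|L|\,21^{-n}$ times the right-hand side of Eq.~\eqref{eq:coset-ggk}.

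\emph{Matching terms.} This is the step that needs care. For ${\cal A}_1\times{\cal B}_1=L\times C$ the differences lie in $C-L=C$. For ${\cal A}_2\times{\cal B}_2=C\times(L+2e)$ they lie in $(L+2e)-(L+e)=C$ as well.

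An Eve sequence $z^n$ fixes the support $S=\operatorname{supp}(d)$, the restriction $a=x|_S$, and $d$ itself, since $d_S=y_S-x_S$ and $d$ vanishes off $S$. Given $z^n$, the pair is determined by $x$, with $y=x+d$. So $P(z^n,{\cal A}_1,{\cal B}_1)=W(d)\,21^{-n}\,\#\{x\in L:x_S=a\}$, and similarly with $C$ in place of $L$ for the second event.

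Both counts are either $0$ or $K_S:=|\{h\in L:h_S=0\}|$, because $C$ is a translate of $L$. Hence each nonzero product under the square root contributes $W(d)K_S\,21^{-n}$. The product is nonzero exactly when $a\in\pi_S(L)\cap\pi_S(C)$, where $\pi_S$ denotes restriction to $S$. This intersection is nonempty iff $\pi_S(e)\in\pi_S(L)$, i.e.\ iff some $h\in C$ has $h_S=0$. That is precisely the condition $d\in G(C)$ of Eq.~\eqref{eq:good-coset}. When it holds, $\pi_S(C)=\pi_S(L)$, so the number of admissible $a$ is $|\pi_S(L)|=|L|/K_S$.

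Summing over $z^n$ grouped by $d$, each $d\in G(C)$ contributes $W(d)K_S\cdot|L|/K_S\cdot 21^{-n}=|L|\,21^{-n}W(d)$, and each $d\notin G(C)$ contributes nothing. The left-hand side of Eq.~\eqref{eq:ggkpositive} is therefore $|L|\,21^{-n}\sum_{d\in G(C)}W(d)$. Dividing both sides by the positive factor $|L|\,21^{-n}$ gives Eq.~\eqref{eq:coset-ggk}, with equivalence in both directions.

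The main obstacle is the bookkeeping in the overlap term. One must verify that $z^n$ determines $d$, including its zero pattern. One must also check that the fibre sizes of $L$ and of $C$ over $S$ coincide, so that the square root of the product collapses to a single factor $K_S$. Finally, the cancellation of $K_S$ against $|\pi_S(L)|$ must be justified. Zero weights, for example at $\beta=0$, cause no difficulty, since those terms contribute zero on both sides.
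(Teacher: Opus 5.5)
Your proposal is correct and follows essentially the same route as the paper. Both use the same $|L|$-fold count of differences for the two error events, the same decomposition of the overlap by the difference $d$ (which Eve's word determines), and the same kernel $K_d=\{\ell\in L:\ell_i=0 \text{ whenever } d_i\neq0\}$ with $|L|/|K_d|$ common Eve words for each $d\in G(C)$. Your observation that $\pi_S(L)$ and $\pi_S(C)$ are cosets of a single subgroup, and hence either coincide or are disjoint, just spells out the counting step that the paper states directly via $h=x'-x$.
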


\begin{proof}
For each pair of cosets, each possible difference occurs exactly
$|L|$ times among the pairs of words.  Thus the probabilities of the two
error events, before the common factor $21^{-n}$,
are $|L|\sum_{-C}W$ and $|L|\sum_LW$.

Both matching events have difference coset $C$.  Fix $d\in C$.
Eve's word determines $d$ coordinatewise, since an off-diagonal symbol reveals
both entries, while the common diagonal symbol implies $d_i=0$.
Consequently, distinct differences give disjoint sets of
possible Eve outcomes.
Their pairs can be written as $(x,x+d)$ with $x\in L$ and
$(x',x'+d)$ with $x'\in C$.  Eve sees the same word from the two pairs
exactly when $x'_i=x_i$ on every off-diagonal coordinate $d_i\ne0$.
With $h=x'-x$, this is precisely
\begin{equation}
 h\in C,\qquad d_i h_i=0\quad\forall i.
\end{equation}
Hence the overlap vanishes unless $d\in G(C)$.

For $d\in G(C)$, let
\begin{equation}
 K_d=\{\ell\in L:\ell_i=0\text{ whenever }d_i\ne0\}.
\end{equation}
For each common Eve word there are $|K_d|$ compatible pairs in each
matching event, and there are $|L|/|K_d|$ such Eve words.  Their total
Bhattacharyya contribution is therefore
\begin{equation}
 \frac{|L|}{|K_d|}
 \sqrt{|K_d|W(d)\,|K_d|W(d)}
 =|L|W(d).
\end{equation}
Summing over $d\in G(C)$ and cancelling the common factor $|L|/21^n$ gives
Eq.~\eqref{eq:coset-ggk}.
\end{proof}

\subsection{Positive key rate for every
\texorpdfstring{$3<\alpha\le5$}{3<alpha<=5}
}
\label{sec:parity-family}

Choose $m\ge2$ and define the single-parity-check code
\begin{equation}
 L_m=\left\{u\in\Fthree^m:\sum_{i=1}^m u_i=0\right\},
 \qquad
 C_j=\left\{u\in\Fthree^m:\sum_{i=1}^m u_i=j\right\}.
 \label{eq:parity-cosets}
\end{equation}
We call $\sum_{i=1}^m u_i\bmod 3$ the residue of the word
$u$, so that $C_j$ consists of the words of residue $j$ and $L_m=C_0$.
Repeat each coordinate $r$ times through
\begin{equation}
 E_r(u_1,\ldots,u_m)
 =(\underbrace{u_1,\ldots,u_1}_{r},
 \ldots,\underbrace{u_m,\ldots,u_m}_{r})
 \in\Fthree^{mr}.
 \label{eq:repetition-map}
\end{equation}
For $c\in\{1,2\}$, use
\begin{equation}
 {\cal A}_1=E_r(L_m),\quad
 {\cal A}_2={\cal B}_1=E_r(C_c),\quad
 {\cal B}_2=E_r(C_{-c}).
 \label{eq:parity-filter}
\end{equation}

A word $d\in C_c$ satisfies the condition in
Eq.~\eqref{eq:good-coset} exactly when it has a zero coordinate.  Indeed, if
$d_j=0$, the word supported only at $j$ with value $c$ lies in $C_c$ and
has disjoint support from $d$.  If $d$ has full support, the only
word with support disjoint from $d$ is the zero word, which is not in $C_c$.

Put $A_r=2^r$, $B_r=\alpha^r$, and $C_r=\beta^r$, and define the weighted
residue sums
\begin{align}
 S_j^{(m)}
 &=\sum_{\substack{u\in\Fthree^m\\\sum_i u_i=j}}
 A_r^{N_0(u)}B_r^{N_1(u)}C_r^{N_2(u)},\\
 T_j^{(m)}
 &=\sum_{\substack{u\in\{1,2\}^m\\\sum_i u_i=j}}
 B_r^{N_1(u)}C_r^{N_2(u)}.
 \label{eq:residue-sums}
\end{align}
Here $N_k(u)$ counts the coordinates of $u$ equal to $k$.  Lemma
\ref{lem:coset-ggk} now gives the criterion
\begin{equation}
 \left(S_c^{(m)}-T_c^{(m)}\right)^2
 >S_{-c}^{(m)}S_0^{(m)}.
 \label{eq:parity-criterion}
\end{equation}

\begin{theorem}[Distillability of the standard-basis family]
\label{thm:gw-all}
For every fixed $3<\alpha\le5$, there are finite $m,r$ for which the
parity-coset filter in Eq.~\eqref{eq:parity-filter} satisfies
Eq.~\eqref{eq:parity-criterion}.  Consequently the source in
Eq.~\eqref{eq:gwsource} has a positive secret-key rate for every
$3<\alpha\le5$.
\end{theorem}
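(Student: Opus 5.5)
The plan is to make the criterion~\eqref{eq:parity-criterion} explicit through a discrete Fourier transform over $\Fthree$, and then to choose $(m,r)$ so that a leading-order comparison of exponentials decides it. Put $\omega=e^{2\pi i/3}$. Every weighted residue sum is then a three-term character sum:
\[
 S_j^{(m)}=\tfrac13\sum_{k=0}^{2}\omega^{-jk}\bigl(A_r+B_r\omega^k+C_r\omega^{2k}\bigr)^m ,
 \qquad
 T_j^{(m)}=\tfrac13\sum_{k=0}^{2}\omega^{-jk}\bigl(B_r\omega^k+C_r\omega^{2k}\bigr)^m .
\]
Write $s=A_r+B_r+C_r$, $\lambda=A_r+B_r\omega+C_r\omega^2$, $t=B_r+C_r$ and $\mu=B_r\omega+C_r\omega^2$. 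This gives $S_j=\frac13\bigl(s^m+2\operatorname{Re}(\omega^{-j}\lambda^m)\bigr)$ and $T_j=\frac13\bigl(t^m+2\operatorname{Re}(\omega^{-j}\mu^m)\bigr)$. Both sides of Eq.~\eqref{eq:parity-criterion} are therefore explicit in $2^r$, $\alpha^r$, $\beta^r$, $m$ and $c$. The free choice of $c\in\{1,2\}$ aligns the phase of $\omega^{-c}\lambda^m$ favourably, i.e.\ $c\equiv m$ modulo three when $\lambda$ is dominated by $B_r\omega$.

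A first heuristic check shows which regime to avoid. If I fix $r$ and let $m\to\infty$, all $S_j$ tend to $s^m/3$, and the criterion reduces to a first-order comparison between the Fourier correction $|\lambda/s|^m$ and the subtracted term $(t/s)^m$. Expanding for large $r$ gives $|\lambda|/s\approx1-\tfrac32(2^r+\beta^r)/\alpha^r$ against $t/s\approx1-2^r/\alpha^r$. The subtracted term then wins, so the regime ``$m$ large at fixed $r$'' does not work. A direct check at $r=1$, $\alpha=4$ (where $|\lambda|=\sqrt7<t=5$) agrees.

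I will therefore use the opposite regime: $m$ fixed (small, as in the two-copy example, which is the case $m=2$, $r=1$ up to relabelling) and $r\to\infty$. In that limit each of $S_c-T_c$, $S_{-c}$ and $S_0$ is dominated by its heaviest words. The heaviest word is the all-ones word, of weight $\alpha^{rm}$ and residue $m$. Taking $c\equiv m$, this word lies in $C_c$, but it has full support, so it cancels in $S_c-T_c$. The surviving leading terms of $S_c-T_c$ come from words with a zero coordinate, subject to the residue constraint $N_2\equiv N_0 \pmod 3$. The leading candidate is one $0$, one $2$ and ones elsewhere, of weight $(2\beta)^r\alpha^{r(m-2)}$, times its multiplicity $m(m-1)$.

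For $S_{-c}$ and $S_0$ I will similarly list the minimal deviations from the all-ones word compatible with the required residues. Each deviation replaces a factor $\alpha^r$ by $2^r$ or by $\beta^r$. Comparing the logarithmic growth rates in $r$ of both sides turns Eq.~\eqref{eq:parity-criterion} into a linear inequality among $\log2$, $\log\alpha$ and $\log\beta$. I expect that, for a suitable residue class of $m$, this inequality becomes $\beta<2$, i.e.\ $\alpha>3$. Ties between equal exponents would then be broken by the combinatorial multiplicities, which grow with $m$.

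The main obstacle is exactly this bookkeeping: identifying, for each residue class of $m$ modulo three, the dominant words on each side and checking that strict exponential separation holds for every $3<\alpha\le5$. It is delicate near $\alpha\downarrow3$, where $2\beta\to4$ and several exponents coalesce. If the pure $r\to\infty$ comparison ties there, I would fall back on the exact Fourier formulas. I would keep $r$ large but finite and let $m$ grow with $r$ at a tuned rate. The aim is to balance $|\lambda/s|^m$ against $(t/s)^m$ while still exploiting the phase choice of $c$.

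Once some finite $(m,r)$ is shown to satisfy Eq.~\eqref{eq:parity-criterion}, Lemma~\ref{lem:coset-ggk} (with $L=E_r(L_m)$ and $C=E_r(C_c)$, whose good set $G(C)$ was identified above as the words with a zero coordinate) shows that the GGK inequality~\eqref{eq:ggkpositive} holds. The achievability direction of GGK Theorem~4 then yields $S(X;Y\Vert Z)>0$.
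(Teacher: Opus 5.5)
\textbf{There is a genuine gap: the regime and the coset you choose cannot certify $3<\alpha<4$.}

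Your closing step (Lemma~\ref{lem:coset-ggk} plus GGK achievability) is fine. You are also right that fixed $r$ with $m\to\infty$ fails. The problem is what you do next.

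\emph{The choice $c\equiv m$ fails for every $\alpha$.} Divide by $\alpha^{rm}$ and write $x=(2/\alpha)^r$, $y=(\beta/\alpha)^r$.
\begin{itemize}
\item With $c\equiv m$, the all-ones word sits in the matching coset. It has full support, so it cancels in $S_c-T_c$.
\item The surviving matching mass therefore needs two deviations from the all-ones word: $m(m-1)xy$, or $\binom m3x^3$.
\item The two error cosets, of residues $m\mp1$, each contain one-deviation words, with weights $mx$ and $my+\binom m2x^2$.
\end{itemize}
At fixed $m$ and $r\to\infty$, the left side of Eq.~\eqref{eq:parity-criterion} is therefore of order $(xy)^2$ or $x^6$. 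The right side is of order $x\max(y,x^2)$. Their ratio tends to zero for every $\alpha$, so no exponent comparison equivalent to $\beta<2$ emerges.

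The reason your phase heuristic misleads you is that it aligns $S_c$. What matters is $S_c-T_c$, and its leading term is $mA_rB_r^{m-1}$, coming from the single-zero words of residue $m-1$. The correct choice is $c\equiv m-1$, which is possible for $m\not\equiv1\pmod3$ and puts the all-ones word into an error coset. The two-copy filter you cite has $m=2$ and $c=1$, so it already follows this choice.

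\emph{Fixed $m$ is not enough even with the right $c$.} With $c\equiv m-1$, the criterion becomes, to leading order, $(mx)^2$ against $1\cdot\bigl(my+\binom m2x^2\bigr)$.
\begin{itemize}
\item For $\alpha\ge4$ this works, because the ratio tends to $m^2/\binom m2$ or $2m/(m+1)$.
\item For $3<\alpha<4$ one has $\alpha\beta>4$, so $y\gg x^2$, and the single-two words in residue $m+1$ dominate. The criterion fails.
\end{itemize}

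\emph{The missing idea is a coupled limit.} Let $m$ grow exponentially in $r$ inside a window:
\begin{itemize}
\item $mx\to0$, so the all-ones, single-zero and double-zero words dominate their residue classes;
\item $y/(mx^2)=(\alpha\beta/4)^r/m\to0$, so the single-two words are negligible;
\item $y/x\to0$.
\end{itemize}
This is achieved by $r=\lceil\kappa\ln m\rceil$ with $1/\ln(\alpha/2)<\kappa<1/\ln(\alpha\beta/4)$. That window is nonempty exactly when $\beta<2$, which is where $\alpha>3$ enters. The ratio then tends to $m^2/\binom m2\to2>1$.

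Your fallback of letting $m$ grow with $r$ points in this direction, but it has two problems. It is framed as balancing $|\lambda/s|^m$ against $(t/s)^m$, which is the $mx\to\infty$ regime you already excluded. It also keeps the wrong $c$. As written, your proposal produces no certificate for $3<\alpha<4$, which is the bound-entangled range the theorem is about.
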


\begin{proof}
The original protocol of Ref.~\cite{GisinWolf2000} covers $\alpha>4$.
We nevertheless treat the full range $3<\alpha\le5$ at once, so that the
theorem is self-contained.  For this proof, divide every
$S_j^{(m)}$ and $T_j^{(m)}$ by the common factor $B_r^m$, leaving
Eq.~\eqref{eq:parity-criterion} unchanged.  We use the same symbols for
these normalized sums and write
\begin{equation}
 x=\left(\frac2\alpha\right)^r,\qquad
 y=\left(\frac\beta\alpha\right)^r.
 \label{eq:xy-small}
\end{equation}
Take either
\begin{equation}
 m\equiv0\pmod3,\ c=2,
 \qquad\text{or}\qquad
 m\equiv2\pmod3,\ c=1.
 \label{eq:mc-choice}
\end{equation}
If $s=m\bmod3$ is the residue of the all-one word, then in both cases
$c=s-1$.

For $3<\alpha<4$, choose a constant $\kappa$ satisfying
\begin{equation}
 \frac1{\ln(\alpha/2)}
 <\kappa<
 \frac1{\ln(\alpha\beta/4)}.
 \label{eq:kappa-window}
\end{equation}
This interval is nonempty because $\beta<2$.  For $\alpha\ge4$ one has
$\alpha\beta\le4$, so only the left-hand inequality is needed.  Set
\begin{equation}
 r=\lceil\kappa\ln m\rceil.
\end{equation}
Along either residue class in Eq.~\eqref{eq:mc-choice},
\begin{equation}
 mx\longrightarrow0,\qquad
 \frac{y}{mx^2}
 =\frac1m\left(\frac{\alpha\beta}{4}\right)^r
 \longrightarrow0,\qquad
 \frac yx=\left(\frac\beta2\right)^r\longrightarrow0.
 \label{eq:three-limits}
\end{equation}
The same limits hold for $\alpha\ge4$, where the middle expression is at
most $1/m$.

Count the coordinates that differ from $1$.  The total
normalized weight of words with exactly $k$ such coordinates is
$\binom mk(x+y)^k$.  Since $m(x+y)\to0$, words with two or more
such coordinates have total weight $O((m(x+y))^2)$, and
those with three or more have weight $O((m(x+y))^3)$.
The leading contribution to $S_c^{(m)}-T_c^{(m)}$, with
$c=s-1$, comes from the $m$ words with exactly one zero.  Therefore
\begin{equation}
 S_c^{(m)}-T_c^{(m)}=mx\,(1+o(1)).
 \label{eq:good-asymptotic}
\end{equation}
The words with two twos also have residue $c$, but they have full support
and cancel in $S_c^{(m)}-T_c^{(m)}$.
The all-one word dominates its own residue class:
\begin{equation}
 S_s^{(m)}=1+o(1).
 \label{eq:all-one-asymptotic}
\end{equation}
In residue $s+1$, the only contributions with at most two
coordinates different from $1$ come from words with one coordinate equal
to $2$, of total weight $my$, and words with two coordinates
equal to $0$, of total weight
$\binom m2x^2$.  A mixed zero--two pair has residue $s$.
Equation~\eqref{eq:three-limits} makes the former
negligible, while the contribution from words with at least
three coordinates different from $1$ is
$o(m^2x^2)$.  Hence
\begin{equation}
 S_{s+1}^{(m)}=\binom m2x^2(1+o(1)).
 \label{eq:two-zero-asymptotic}
\end{equation}
For both choices in Eq.~\eqref{eq:mc-choice}, the residues of the two error events
$\{-c,0\}$ are exactly $\{s,s+1\}$.  It follows that

\begin{equation}
 \frac{\left(S_c^{(m)}-T_c^{(m)}\right)^2}
 {S_{-c}^{(m)}S_0^{(m)}}
 =\frac{m^2}{\binom m2}\,(1+o(1))
 \longrightarrow 2.
 \label{eq:ratio-to-two}
\end{equation}

The limit is strictly larger than one, so
Eq.~\eqref{eq:parity-criterion} holds for all sufficiently large $m$.
The resulting blocklength $n=mr$ is finite for each fixed
$3<\alpha\le5$.
\end{proof}

Returning to the unnormalized sums of
Eq.~\eqref{eq:residue-sums}, we can evaluate them using the recurrences
\begin{align}
 S_j^{(m+1)}
 &=A_rS_j^{(m)}+B_rS_{j-1}^{(m)}+C_rS_{j-2}^{(m)},\\
 T_j^{(m+1)}
 &=B_rT_{j-1}^{(m)}+C_rT_{j-2}^{(m)},
 \label{eq:residue-recurrence}
\end{align}
starting from $S_0^{(0)}=T_0^{(0)}=1$ and
$S_1^{(0)}=S_2^{(0)}=T_1^{(0)}=T_2^{(0)}=0$.  For rational $\alpha$, the
criterion can be checked exactly by clearing denominators and using these
recurrences.
The blocklength in this construction is finite for each
$3<\alpha\le5$.  The blocklength provided by the proof grows without bound
as $\alpha\downarrow3$, since the window in Eq.~\eqref{eq:kappa-window}
closes in this limit.  The argument does not give a practical lower bound
on the secret-key rate.
For example, the criterion holds for $\alpha=18/5$ with
$(m,r,n)=(5,3,15)$ and $c=1$, for
$\alpha=7/2$ with $(6,4,24)$ and $c=2$,
and for $\alpha=16/5$ with $(119,10,1190)$ and
$c=1$.  In each case $c$ is the value prescribed by
Eq.~\eqref{eq:mc-choice}, and the criterion fails for the other value of
$c$.  These blocklengths are illustrative certificates and need not be
minimal.  The two-copy filter of Sec.~\ref{sec:two-copy} is itself the
member $(m,r,n)=(2,1,2)$ of this family with $c=1$, since $L_2={\cal L}$
and $C_1={\cal L}+e$.  By the monotonicity of $q$ shown there, it
certifies a positive secret-key rate for all $\alpha_2<\alpha\le5$.  For
$\alpha\ge4$, the choices $(5,2,10)$ at $\alpha=4$ and $(5,1,5)$ at
$\alpha=9/2$ and $\alpha=5$, again with $c=1$, also satisfy the
criterion, but $(2,1,2)$ is shorter.
Equations~\eqref{eq:gw-zero-range}, \eqref{eq:gw-positive-intrinsic}, and
Theorem~\ref{thm:gw-all} give the exact transition
\begin{equation}
\begin{array}{c@{\qquad}c@{\qquad}c@{\qquad}c}
 \text{range}&I_{\rm form}(X;Y\mid Z)&I(X;Y{\downarrow}Z)&S(X;Y\Vert Z)\\
 2\le\alpha\le3&0&0&0\\
 3<\alpha\le5&>0&>0&>0.
\end{array}
\label{eq:gw-transition}
\end{equation}
None of the distributions in this measured family has
bound information.

\subsection{Positive distillable key of the qutrit states}
\label{sec:gw-quantum-key}

The same protocols also yield secret key from the quantum states
themselves.  This subsection proves Corollary~\ref{cor:quantum-key}: for every $3<\alpha\le5$,
\begin{equation}
 K_D^{\leftrightarrow}(\sigma_\alpha)\ \ge\ S(X;Y\Vert Z)\ >\ 0,
 \label{eq:quantum-lift}
\end{equation}
where $K_D^{\leftrightarrow}(\sigma_\alpha)$ denotes the rate of secret
key distillable from many copies of $\sigma_\alpha$ by local operations
and two-way public communication, with Eve holding a purifying
system (equivalently, the optimal rate at which \emph{private states}
can be distilled by such
operations~\cite{HorodeckiEtAl2005,HorodeckiEtAl2009}), and
$S(X;Y\Vert Z)$ is the secret-key
rate of the standard-basis source of Table~\ref{tab:gw}.  In
particular, every bound-entangled member, $3<\alpha\le4$, has
positive two-way distillable key.

Choose the purification
\begin{align}
 \lvert\Psi_\alpha\rangle={}&
 \sqrt{\tfrac27}\,\lvert\Phi\rangle\lvert\bot\rangle_E
 +\sqrt{\tfrac{\alpha}{21}}\sum_{i=0}^2
   \lvert i,i{+}1\rangle\lvert i,i{+}1\rangle_E
 \notag\\
 &+\sqrt{\tfrac{5-\alpha}{21}}\sum_{i=0}^2
   \lvert i{+}1,i\rangle\lvert i{+}1,i\rangle_E ,
 \label{eq:gw-purification}
\end{align}
with indices modulo three and the seven displayed vectors on $E$
orthonormal.  This is the canonical purification also used in
Appendix~\ref{sec:horodecki-bi-measurement}.  Since every purification
differs from this one only by an isometry on Eve's system, which changes
no secrecy quantity, no generality is lost.

After Alice and Bob measure in the standard bases, Eve is left in
the same state $\lvert\bot\rangle_E$ for all three diagonal outcomes,
whereas each off-diagonal outcome $(x,y)$ leaves the distinct orthogonal
flag $\lvert x,y\rangle_E$.  The outcome probabilities reproduce
Eq.~\eqref{eq:gw-weight}, and the post-measurement state is an
orthogonal encoding of the classical source of Table~\ref{tab:gw}:
\begin{equation}
 \rho_{XYE}
 =\sum_{x,y}P_{XY}(x,y)\,
 \lvert x,y\rangle\!\langle x,y\rvert\otimes
 \lvert e_{z(x,y)}\rangle\!\langle e_{z(x,y)}\rvert ,
 \label{eq:orthogonal-encoding}
\end{equation}
where $z(x,y)$ is Eve's symbol in Table~II and the states
$\{\lvert e_z\rangle\}_z$ are orthonormal.

Eve's quantum register is therefore operationally equivalent to the
classical variable $Z$.  Measuring $E$ in the flag basis produces $Z$
without loss, while the preparation map
$\lvert z\rangle\mapsto\lvert e_z\rangle$ reconstructs $E$ from $Z$.
The equivalence persists for $n$ copies and after any public
transcript.  Even a transcript-dependent joint POVM $\{M_m\}$ on $E^n$
yields outcome probabilities
$\Pr(m\mid z^n)=\langle e_{z^n}\rvert M_m\lvert e_{z^n}\rangle$ and is
therefore a stochastic post-processing of $Z^n$.  Equivalently, the
final state of any protocol is the image of the corresponding classical
one under the isometry $\lvert z^n\rangle\mapsto\lvert e_{z^n}\rangle$,
so the classical and quantum trace-distance secrecy errors coincide.
Every classical protocol secure against $Z^n$ is thus secure against
the full purifying system $E^n$.  Running a protocol that achieves the
classical rate after the local standard-basis measurements proves
Eq.~\eqref{eq:quantum-lift}.  Positivity of the classical rate for
$3<\alpha\le5$ is Theorem~\ref{thm:gw-all}.

The finite-block classical protocols established in this section are run after local standard-basis measurements.  The previous explicit
low-dimensional bound-entangled states with proven positive distillable key
are approximate private states with a key--shield structure in
dimension $4\otimes4$~\cite{HorodeckiPankowski2008}.  We are not aware of an
earlier such proof for the original $3\otimes3$ family of
Ref.~\cite{HorodeckiActivated1999}.  This argument applies to the
standard-basis source, for which Eve's conditional states form an
orthogonal encoding of $Z$.  It does not apply to the alternative Eve
measurement of Appendix~\ref{sec:horodecki-bi-measurement}, for which
bound information is proved only against the specified
classical adversary.

\section{Bound information from the Horodecki qutrit family}
\label{sec:horodecki-bi-measurement}

The conclusion above concerns the standard-basis measurement used in
Ref.~\cite{GisinWolf2000}.  We now exhibit alternative measurements of the
same qutrit states whose outcomes do have bound information.

Recall the Horodecki family
\begin{equation}
 \sigma_\alpha
 =\frac{2}{7}\lvert\Phi\rangle\!\langle\Phi\rvert
 +\frac{\alpha}{21}
   \sum_{(i,j)\in\{(0,1),(1,2),(2,0)\}}
   \lvert ij\rangle\!\langle ij\rvert
 +\frac{5-\alpha}{21}
   \sum_{(i,j)\in\{(1,0),(2,1),(0,2)\}}
   \lvert ij\rangle\!\langle ij\rvert ,
 \label{eq:alternative-state}
\end{equation}
where
\(\lvert\Phi\rangle=(\lvert00\rangle+\lvert11\rangle+
\lvert22\rangle)/\sqrt3\).
Alice and Bob use the binary POVMs with zero-outcome effects
\begin{equation}
 A_0=\operatorname{diag}\left(0,\frac12,1\right),\qquad
 B_0=\operatorname{diag}\left(1,\frac12,0\right),
 \label{eq:alternative-honest-povms}
\end{equation}
and complementary effects \(A_1=\mathbf1-A_0\) and
\(B_1=\mathbf1-B_0\).

Consider the canonical purification in
Eq.~\eqref{eq:gw-purification}, whose seven orthogonal flag states on
Eve's system are ordered as
\[
 k=\Phi,\;01,\;12,\;20,\;10,\;21,\;02,
\]
and put \(\beta=5-\alpha\).  Eve uses the three-outcome POVM, the
$\alpha$-adapted POVM of Fig.~\ref{fig:alpha},
\begin{equation}
 E_z=\sum_k q_\alpha(z\mid k)\lvert k\rangle\!\langle k\rvert,
 \qquad z\in\{0,1,\bot\},
\end{equation}
where
\begin{equation}
\bigl(q_\alpha(z\mid k)\bigr)_{z,k}
=
\begin{pmatrix}
0&0&0&\dfrac1{10\alpha}&\dfrac1\beta&\dfrac1\beta&0\\[2mm]
0&\dfrac1\alpha&\dfrac1\alpha&0&0&0&\dfrac1{10\beta}\\[2mm]
1&1-\dfrac1\alpha&1-\dfrac1\alpha&1-\dfrac1{10\alpha}&
1-\dfrac1\beta&1-\dfrac1\beta&1-\dfrac1{10\beta}
\end{pmatrix}.
\label{eq:alternative-eve-povm}
\end{equation}
For \(2\le\alpha\le4\), all entries in
Eq.~\eqref{eq:alternative-eve-povm} are nonnegative and each column sums
to one, so this is a valid POVM.

This POVM admits a rank-one refinement whose additional
outcome carries no information about $X$ and $Y$ once $z$ is known.  Let
\(\mathcal S_z=\{k:q_\alpha(z\mid k)>0\}\),
\(r_z=|\mathcal S_z|\), and assign consecutive integers
\(n_z(k)=0,\ldots,r_z-1\) to the elements of \(\mathcal S_z\).
With \(\omega_z=e^{2\pi i/r_z}\), define, for
\(\ell=0,\ldots,r_z-1\),
\begin{equation}
 \lvert\eta_{z,\ell}\rangle
 =\sum_{k\in\mathcal S_z}
 \sqrt{\frac{q_\alpha(z\mid k)}{r_z}}\,
 \omega_z^{\ell n_z(k)}\lvert k\rangle .
 \label{eq:alternative-rank-one}
\end{equation}
Fourier orthogonality gives
\begin{equation}
 \sum_{\ell=0}^{r_z-1}
 \lvert\eta_{z,\ell}\rangle\!\langle\eta_{z,\ell}\rvert=E_z.
\end{equation}
Moreover, the matrix elements between the distinct $AB$ branch
states paired with these flags vanish under
Alice's and Bob's diagonal measurement effects
\(A_x\otimes B_y\).  Consequently,
\begin{equation}
 P(x,y,z,\ell)=\frac{1}{r_z}P(x,y,z).
 \label{eq:alternative-refined-law}
\end{equation}
Given $z$, $\ell$ is uniform and independent of $X$ and $Y$.
Thus $(z,\ell)$ contains exactly the same information about $X$ and $Y$ as
$z$, that is, the two observations are Blackwell equivalent~\cite{Blackwell1953}.
Thus the construction does not rely
on Eve first learning the flag state and then forgetting it.

Direct evaluation of
\[
 P(x,y,z)=
 \operatorname{Tr}\!\left[
 (A_x\otimes B_y\otimes E_z)
 \lvert\Psi_\alpha\rangle\!\langle\Psi_\alpha\rvert
 \right],
\]
where \(\lvert\Psi_\alpha\rangle\) is the canonical purification of
\(\sigma_\alpha\), gives the following unnormalized \(XY\) slices,
independently of \(\alpha\):
\begin{align}
 M_0&=
 \begin{pmatrix}
 11/210&1/42\\
 1/42&0
 \end{pmatrix},&
 M_1&=
 \begin{pmatrix}
 0&1/42\\
 1/42&11/210
 \end{pmatrix},\label{eq:alternative-slices01}\\[2mm]
 M_\bot&=
 \begin{pmatrix}
 22/105&4/21\\
 4/21&22/105
 \end{pmatrix}.
 \label{eq:alternative-slicebot}
\end{align}
This is exactly the member of the two-parameter family in
Appendix~\ref{sec:family} with
\begin{equation}
 t=\frac12-\frac{\sqrt{21}}{42},\qquad
 \delta=\frac12-\frac{\sqrt{21}}{22},\qquad
 \epsilon=\frac45.
\end{equation}
Indeed,
\begin{equation}
 2\delta
 =1-\frac{\sqrt{21}}{11}
 <\frac45
 <\frac{100}{121}
 =4\delta(1-\delta),
\end{equation}
so Eq.~\eqref{eq:familywindow} applies and proves
\begin{equation}
 S(X;Y\Vert Z)=0
 <I(X;Y{\downarrow}Z)
 \le I_{\rm form}(X;Y\mid Z).
\end{equation}
Thus every bound-entangled state in the range \(3<\alpha\le4\) admits
measurements of a purification that produce a bound-information
distribution.  The same construction produces the identical classical
law from purifications of the separable members,
\(2\le\alpha\le3\).  These statements concern the specified classical
measurements of a purification.  They do not assert secrecy against an
adversary who instead retains arbitrary quantum side information.

\section{Further historical candidates}
\label{sec:historical-candidates}
Two further historical candidates for bound information remain
open.  For the second candidate of
Renner and Wolf~\cite{RennerWolf2003}, a key is distillable below the
conjectured transition $a=1/(4\sqrt2)$, while at and above it our
method provably cannot apply (see Appendix~\ref{sec:renwol-candidate}), so a vanishing rate needs a
different idea.  Example~2 of
Ref.~\cite{GisinWolf2000}, derived from Ref.~\cite{HorodeckiPPT1997}, is
likewise unresolved.

\end{document}